\newtheorem{theorem}{Theorem}
\newtheorem{definition}{Definition}
\newtheorem{lemma}{Lemma}
\newtheorem{proposition}{Proposition}
\newtheorem{proof}{Proof}
\def\<{\langle}
\def\>{\rangle}
\begin{document}

\centerline{\bf A Formulation of R\'enyi Entropy on $C^*$-Algebras}

\bigskip\bigskip

\centerline{\sc Farrukh Mukhamedov}
\vspace{+1mm}
\centerline{\it  Department of Mathematical Sciences,}
\centerline{\it United Arab Emirates University,}
\centerline{\it 15551 Al-Ain, United Arab Emirates}
\centerline{E-mail: {\tt far75m@yandex.ru,\quad farrukh.m@uaeu.ac.ae}}
\bigskip
\centerline{\sc Kyouhei Ohmura}
\vspace{+1mm}
\centerline{\it Department of Information Sciences,}
\centerline{\it Tokyo University of Science,}
\centerline{\it Noda City, Chiba 278-8510, Japan}
\centerline{E-mail: {\tt 6317701@ed.tus.ac.jp, \quad ohmura.kyouhei@gmail.com}}
\bigskip
\centerline{\sc Noboru Watanabe}
\vspace{+1mm}
\centerline{\it Department of Information Sciences,}
\centerline{\it Tokyo University of Science,}
\centerline{\it Noda City, Chiba 278-8510, Japan}
\centerline{E-mail: {\tt watanabe@is.noda.tus.ac.jp}}

\bigskip\bigskip

\centerline{\bf Abstract }

\bigskip%\bigskip
\noindent
The entropy of probability distribution defined by Shannon has several extensions. R\'enyi entropy is  one of the general extensions of Shannon entropy and is widely used in engineering, physics, and so on. On the other hand, the quantum analogue of Shannon entropy is von Neumann entropy. Furthermore, the formulation of this entropy was extended to on $C^*$-algebras by Ohya ($\mathcal{S}$-mixing entropy). In this paper, we formulate Renyi entropy on $C^*$-algebras based on $\mathcal{S}$-mixing entropy and prove several inequalities for the uncertainties of states in various reference systems.

\bigskip
\noindent 
{\it Keywords:} Quantum Information Theory; Quantum Entropy; $\mathcal{S}$-mixing entropy; R\'enyi Entropy;  Quantum Statistical Mechanics; Operator Algebras.

\tableofcontents

\section{Introduction}
\label{intro}
Shannon introduced the entropy as the information amount of information systems represented by probability spaces \cite{shannon}. R\'enyi defined a general extension of Shannon entropy on probability spaces which is called R\'enyi entropy \cite{renyi}. R\'enyi entropy is more general than Shannon entropy in the sense of a positive number $\alpha$, and it corresponds to Shannon entropy when $\alpha \to 1$. This entropy is useful and widely used in physics, engineering, and so on \cite{appl}, \cite{taka}.

On the other hand, von Neumann entropy measures the complexity (or the information amount) of a quantum system \cite{vn}. In 1984, Ohya formulated the general extension of von Neumann entropy which is called $\mathcal{S}$-mixing entropy on $C^*$-algebras \cite{smix},\cite{itsuse}, \cite{makino}, \cite{noteon}. $\mathcal{S}$-mixing  entropy depends on choosing subset (reference system) of the set of all states on the $C^*$-algebra. Thanks to the property, one can measures the uncertainty of the state depending on reference systems. Mukhamedov and Watanabe formulated an extension of $\mathcal{S}$-mixing entropy by taking the set of all quantum channels as the reference system. Moreover, they showed that the entropy can apply to detect entangled states and calculated the complexities of qubit and phase-damping channels \cite{mw}. 

In this paper, we formulate R\'enyi entropy on $C^*$-algebras based on $\mathcal{S}$-mixing entropy and show that the introduced entropy corresponds to $\mathcal{S}$-mixing entropy when $\alpha \to 1$. Furthermore, we prove  that our R\'enyi entropy is a general extension of quantum R\'enyi entropy \cite{petz}, \cite{qent} if $\alpha > 1$. Moreover, by using our R\'enyi entropy, we investigate the uncertainties of states measured from various reference systems.

We organize the paper as follows: In Section 2, we recall the notations and some properties of the R\'enyi entropy on probability spaces. Furthermore, we review the decomposition theory of states on $C^*$-algebras and the definition of $\mathcal{S}$-mixing  entropy. In Section 3, we formulate R\'enyi entropy on $C^*$-algebras based on the definition of $\mathcal{S}$-mixing entropy and show several properties of it. Furthermore, by using the introduced entropy, we prove the equalities or inequalities of the complexities of  states measured from different reference systems.

%Preliminaries------

\section{Preliminaries}
\label{sec:1}
In this section, we review the definitions of R\'enyi entropy and $\mathcal{S}$-mixing entropy, and those several properties. 

\subsection{R\'enyi Entropy}
\label{sec:2}
In this chapter, $\log$ denotes the logarithm of base $2$.

%Ddef of Renyi
\begin{definition}
Let $\{ p_1, p_2, \cdots , p_n \}$ be the  probability distribution of a random variable $X$. The {\it R\'enyi entropy} is defined by
\begin{equation}
S_{\alpha} (X) := \frac{1}{1 - \alpha} \log \sum_{k=1}^n p_k^{\alpha}\quad , \quad \alpha \in [0, +\infty)\backslash \{ 1\}.
\end{equation}
\end{definition}
This entropy corresponds to the Shannon entropy when $\alpha \to 1$. Namely, the following theorem holds.

%Theorem of Renyi = Shannon
\begin{theorem}
Under the above assumptioms,
\begin{equation}
\lim_{\alpha \to 1} S_{\alpha} (X) = - \sum_{k=1}^n p_k \log p_k
\end{equation}
is satisfied.
\end{theorem}

%Additivity of Renyi
\noindent Furthermore, R\'enyi entropy has the additivity.
\begin{theorem}
If $X$ and $Y$ are independent random variables,
\begin{equation}
S_{\alpha} (X, Y) = S_{\alpha} (X) + S_{\alpha} (Y).
\end{equation}
\end{theorem}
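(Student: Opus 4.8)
The plan is to exploit the multiplicative structure that independence imposes on the joint distribution, which converts the sum appearing in the definition of $S_\alpha$ into a product. First I would fix notation: let $\{ p_1, \dots, p_n \}$ be the distribution of $X$ and $\{ q_1, \dots, q_m \}$ the distribution of $Y$. Since $X$ and $Y$ are independent, the joint random variable $(X, Y)$ has distribution $\{ p_i q_j \}_{1 \le i \le n,\, 1 \le j \le m}$, so applying the definition of R\'enyi entropy to this joint distribution gives
\begin{equation}
S_{\alpha} (X, Y) = \frac{1}{1 - \alpha} \log \sum_{i=1}^n \sum_{j=1}^m (p_i q_j)^{\alpha}.
\end{equation}

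The central step is to observe that raising to the power $\alpha$ respects products, $(p_i q_j)^{\alpha} = p_i^{\alpha} q_j^{\alpha}$, so that the double sum factors as
\begin{equation}
\sum_{i=1}^n \sum_{j=1}^m p_i^{\alpha} q_j^{\alpha} = \left( \sum_{i=1}^n p_i^{\alpha} \right) \left( \sum_{j=1}^m q_j^{\alpha} \right).
\end{equation}
I would then apply the identity $\log (ab) = \log a + \log b$ and distribute the prefactor $1/(1 - \alpha)$ across the two resulting logarithms; reading off the two terms against the definition immediately identifies them as $S_{\alpha}(X)$ and $S_{\alpha}(Y)$, which yields the claim.

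I do not expect any genuine obstacle here, since the whole argument rests only on the factorization of the joint distribution under independence together with the homomorphism property of the logarithm, both entirely elementary. The one point that deserves explicit mention is that independence is precisely what licenses writing the joint probabilities as $p_i q_j$; without it the double sum need not factor and additivity would fail. It is also worth remarking that the computation is uniform in $\alpha \in [0, +\infty) \backslash \{ 1 \}$, as the prefactor $1/(1 - \alpha)$ is simply carried through the logarithm splitting and causes no difficulty away from the excluded value $\alpha = 1$.
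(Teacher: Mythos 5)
Your proof is correct: the factorization of the joint distribution under independence, the identity $(p_i q_j)^\alpha = p_i^\alpha q_j^\alpha$, and the splitting of the logarithm are exactly the standard argument, valid for all $\alpha \in [0,+\infty)\backslash\{1\}$. The paper states this theorem without proof (it is recalled as a classical fact about R\'enyi entropy), and your argument is precisely the canonical one that the cited literature uses.
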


%Decreasing
\noindent Moreover, since
$$
\frac{\partial}{\partial \alpha} S_{\alpha} \le 0,
$$
one can see that this entropy is a decreasing function with respect to the parameter $\alpha$.\\

%Coding
\noindent R\'enyi entropy has important roles for the coding theory. For instance, the following theorem exists for the entropy \cite{camp}, \cite{petz}. \\
\noindent Let $\mathcal{X}$ be a finite alphabet set and $X$ be a rondam variable of $\mathcal{X}$. Let $C$ be a source code, that is, a map from $\mathcal{X}$ to the set of finite-length strings of symbols of a binary alphabet. Then $C (x)$ denotes the codeword of $x \in \mathcal{X}$ and $l(x)$ denotes the length of $C (x)$. Now we define the cost of the coding:
$$
L_{\beta} (C) := \frac{1}{\beta} \log \sum_x p (x) 2^{\beta l (x)}
$$
where $p(x)$ is the pbability of $x$ and $\beta > -1$.

%Coding theorem
\begin{theorem}
Let $\alpha = 1/ (1+\beta)$. For a uniquely decodable code, the following inequality holds:
\begin{equation}
L_{\beta} (C) \geq S_{\alpha} (X).
\end{equation}
Furthermore, there exists a uniquely decodable code $C$ satisfying
\begin{equation}
L_{\beta} (C) \le S_{\alpha} (X) + 1.
\end{equation}
\end{theorem}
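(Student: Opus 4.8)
The plan is to establish the two inequalities separately: the lower bound rests on the Kraft–McMillan inequality together with Hölder's inequality, and the upper bound on an explicit choice of codeword lengths via a tilted (``escort'') distribution. Throughout I would record the algebraic identities forced by $\alpha = 1/(1+\beta)$, namely $1+\beta = 1/\alpha$ and $\alpha\beta = 1-\alpha$; these are exactly what make the two cost functionals comparable.

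For the lower bound $L_{\beta}(C) \ge S_{\alpha}(X)$, I would start from the elementary factorization
$$p(x)^{\alpha} = \bigl(p(x)\,2^{\beta l(x)}\bigr)^{\alpha}\bigl(2^{-l(x)}\bigr)^{\alpha\beta},$$
which holds because the exponents of $2^{l(x)}$ cancel. Summing over $x$ and applying Hölder's inequality with the conjugate exponents $r = 1+\beta$ and $s = 1/(1-\alpha)$ (chosen so that the first factor reproduces $\sum_x p(x)\,2^{\beta l(x)}$ and the second reproduces $\sum_x 2^{-l(x)}$) yields
$$\sum_x p(x)^{\alpha} \le \Bigl(\sum_x p(x)\,2^{\beta l(x)}\Bigr)^{\alpha}\Bigl(\sum_x 2^{-l(x)}\Bigr)^{1-\alpha}.$$
Since $C$ is uniquely decodable, the Kraft–McMillan inequality gives $\sum_x 2^{-l(x)} \le 1$, so the last factor may be dropped when $\alpha<1$; taking $\log$, dividing by $1-\alpha>0$, and using $\alpha/(1-\alpha)=1/\beta$ produces exactly $S_{\alpha}(X) \le L_{\beta}(C)$.

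The main obstacle is the sign of $\beta$. When $-1<\beta<0$ one has $\alpha>1$, the exponent $r=1+\beta$ lies in $(0,1)$ and $s<0$, so Hölder's inequality must be replaced by its reverse form, which flips the displayed inequality; simultaneously $1-\alpha<0$ forces the Kraft factor $\bigl(\sum_x 2^{-l(x)}\bigr)^{1-\alpha}\ge 1$ and makes the final division by $1-\alpha$ reverse direction as well. These sign reversals compose to the same conclusion $S_{\alpha}(X)\le L_{\beta}(C)$, but the bookkeeping must be carried out regime by regime, and this is where the care is needed.

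For the upper bound I would take the tilted distribution $q(x)=p(x)^{\alpha}/\sum_y p(y)^{\alpha}$ and set $l(x)=\lceil -\log q(x)\rceil$. Then $2^{-l(x)}\le q(x)$, so $\sum_x 2^{-l(x)}\le 1$ and the converse Kraft inequality furnishes a prefix, hence uniquely decodable, code with these lengths. Using the strict bound $l(x) < -\log q(x)+1$ together with the identities above, a direct computation gives
$$\sum_x p(x)\,q(x)^{-\beta} = \Bigl(\sum_x p(x)^{\alpha}\Bigr)^{1/\alpha},$$
from which $L_{\beta}(C) < S_{\alpha}(X)+1$ follows, once one again tracks the sign of $\beta$ so that the prefactor $1/\beta$ and the bound $l(x)<-\log q(x)+1$ act in the correct direction in both the $\beta>0$ and $-1<\beta<0$ regimes.
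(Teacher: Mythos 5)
Your proposal is correct, but there is nothing in the paper to compare it against: the paper states this coding theorem purely as background in the preliminaries, citing Campbell and Petz, and gives no proof of its own. So I assess your argument on its merits. It is, in substance, Campbell's classical proof, and every step checks out. For the converse, the factorization $p(x)^{\alpha} = \bigl(p(x)\,2^{\beta l(x)}\bigr)^{\alpha}\bigl(2^{-l(x)}\bigr)^{\alpha\beta}$ is valid, and your choice of H\"older exponents is exactly right: with $r=1+\beta$ and $s=1/(1-\alpha)$ one has $1/r+1/s=1$, the identity $\alpha(1+\beta)=1$ makes the first factor reproduce $\sum_x p(x)2^{\beta l(x)}$ and the second reproduce $\sum_x 2^{-l(x)}$, and $\alpha/(1-\alpha)=1/\beta$ converts the resulting logarithmic inequality into $L_{\beta}(C)\ge S_{\alpha}(X)$ after Kraft--McMillan. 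For achievability, the escort distribution $q\propto p^{\alpha}$ with $l(x)=\lceil -\log q(x)\rceil$ works because $1-\alpha\beta=\alpha$ gives $\sum_x p(x)q(x)^{-\beta}=\bigl(\sum_x p(x)^{\alpha}\bigr)^{1/\alpha}$, and $1/(\alpha\beta)=1/(1-\alpha)$ turns the bound into $L_{\beta}(C)<S_{\alpha}(X)+1$. Your treatment of the regime $-1<\beta<0$ (where $\alpha>1$) is also sound, and is genuinely needed here: Campbell's original theorem assumes $\beta>0$, whereas the paper asserts the statement for $\beta>-1$. In that regime reverse H\"older flips the first inequality, the Kraft factor raised to the negative power $1-\alpha$ is $\ge 1$ rather than $\le 1$, and the final division by $1-\alpha<0$ flips once more, so the three sign changes compose to the same conclusion; likewise in the achievability step the flip caused by $\beta<0$ in the exponent is undone by the prefactor $1/\beta$. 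Two small points to make explicit in a full write-up: restrict all sums to the support of $p$ (reverse H\"older requires strictly positive terms in the factor carrying the negative exponent), and state the reverse H\"older inequality you invoke (exponents $r\in(0,1)$, $s=r/(r-1)<0$) since it is less standard than the usual form.
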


%Dcomposition Theory-----------

\subsection{Decomposition Theory}
A quantum state can be decomposed into simpler components. In this section, we recall the mathematical theory on the decompositions of states \cite{br1}, \cite{qent} that we need as follows.\\

\noindent Let $(\mathcal{A}, \mathfrak{S}, \theta (G))$ be a $C^*$-dynamical system, that is, $\mathcal{A}$ is a $C^*$-algebra, $\mathfrak{S}$ is the set of all states $\varphi$  on $\mathcal{A}$, and $\theta (G)$ is the set of all  *-automorphisms on $\mathcal{A}$ associated with a group $G$. The triplet $(\mathcal{A}, \mathfrak{S}, \theta (G))$ describes the dynamics of a quantum system \cite{qent}. \\
\noindent Moreover, let $I (\theta)$ be the set of all $\theta$-invariant states (i.e. $\varphi \circ \theta_g = \varphi\ ,\ \forall g \in G$), and  $K_{\beta} (\theta)\ (G = \mathbb{R})$ be the set of all states satisfying KMS condition with respect to $\theta_t$ ($t \in \mathbb{R}$). 

%Def of ergodic decomposition
\begin{definition}
The decomposition from an $\theta$-invariant state into extremal $\theta$-invariant states is called {\it ergodic decomposition}.
\end{definition}

\noindent Since $I (\theta)$ and $K_{\beta} (\theta)$ are weak*-compact and convex subset of $\mathfrak{S}$, 
we deal with the case where spaces have such conditions.

\noindent Let $\mathcal{S}$ be a compact and convex subspace  of a locally convex Hausdorff space. Moreover, let ${\rm ex}\mathcal{S}$ be the set of all extreme points of $\mathcal{S}$. According to the Krein-Mil'man theorem \cite{krein}, ${\rm ex}\mathcal{S} \neq \phi$ and the weak*-closure of convex hull of ${\rm ex}\mathcal{S}$ equals to $\mathcal{S}$, i.e. $\overline{co}^{w^*} {\rm ex}\mathcal{S} = \mathcal{S}$.

%Def of extremal
\begin{definition}
The decomposition from $\mathcal{S}$ into ${\rm ex}\mathcal{S}$ is called {\it extremal decomposition}.
\end{definition}

%M (S)
\noindent Let $M(\mathcal{S})$ be the set of all normal Borel measures on $\mathcal{S}$. Furthermore, define
\begin{equation}
M_1 (\mathcal{S}) := \{ \mu \in M (\mathcal{S}),\ \mu (\mathcal{S}) = 1 \}.
\end{equation}

%Def of barycenter
\begin{definition}
For any $\mu \in M (\mathcal{S})$,
\begin{equation}
b (\mu ) := \int_{\mathcal{S}} \omega d \mu (\omega) 
\end{equation}
is called the {\it barycenter} of $\mu$.
\end{definition}

\noindent Moreover, let $C_{\mathbb{R}}(\mathcal{S})$ be the set of all real continuous functions on $\mathcal{S}$ and
%K(S)
$$
K(\mathcal{S}) := \{ f \in C_{\mathbb{R}} (\mathcal{S})\ ;\ f\ {\rm are \ convex\ functions}\}.
$$
\noindent For two measures $\mu , \nu \in M (\mathcal{S})$, define ``$\prec$'' as follows :
$$
\mu \prec \nu \overset{def}{\iff} \mu (f) \le \nu (f),\quad \forall f \in K(\mathcal{S}).
$$
Then $\prec$ gives an ordering on $M(\mathcal{S})$. Let us denote $M^m (\mathcal{S})$ as the set of all maximal elements with respect to the ordering.\\

\noindent Furthermore, we recall the following theorems.

%Th of S1
\begin{theorem}
If $\mathcal{S}$ is a metricable compact convex set ;
\begin{enumerate}
\item ${\rm ex} \mathcal{S}$ is a $G_{\delta}$ set.
\item $\mu \in M_1^m (\mathcal{S})$ iff $\mu ({\rm ex} \mathcal{S}) = 1$.
\item For any $\varphi \in \mathcal{S}$, there exist $\mu \in M_1^m (\mathcal{S})$ such that $\varphi = b(\mu)$.
\end{enumerate}
\end{theorem}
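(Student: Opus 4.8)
The plan is to handle the three assertions with the tools of Choquet theory, fixing once and for all a metric $d$ compatible with the weak$^*$ topology on the metrizable compact convex set $\mathcal{S}$. For assertion (1), I would exhibit the complement of ${\rm ex}\mathcal{S}$ as a countable union of closed sets. For each $n$ set
$$
D_n:=\{(\omega,\omega')\in\mathcal{S}\times\mathcal{S}\ :\ d(\omega,\omega')\ge 1/n\},
$$
which is closed, hence compact, in $\mathcal{S}\times\mathcal{S}$. Since the midpoint map $(\omega,\omega')\mapsto\tfrac12(\omega+\omega')$ is weak$^*$-continuous, its image $F_n$ of $D_n$ is compact. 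A point of $\mathcal{S}$ fails to be extreme exactly when it is the midpoint of two distinct points of $\mathcal{S}$, that is, when it lies in $\bigcup_n F_n$. Thus $\mathcal{S}\setminus{\rm ex}\mathcal{S}=\bigcup_n F_n$ is an $F_\sigma$ set and ${\rm ex}\mathcal{S}$ is $G_\delta$.

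For (2) and (3) the central object is the upper envelope. For $f\in C_{\mathbb{R}}(\mathcal{S})$ put
$$
\hat f:=\inf\{h\ :\ h\ \text{continuous and affine on}\ \mathcal{S},\ h\ge f\},
$$
the least concave upper semicontinuous majorant of $f$, whose standard properties I will draw from the decomposition theory of \cite{br1}, \cite{qent}. Because $\mathcal{S}$ is metrizable, the continuous affine functions contain a countable point-separating family $\{a_k\}$ in their unit ball, and then
$$
f_0:=\sum_{k\ge1}2^{-k}a_k^2
$$
is a continuous strictly convex function on $\mathcal{S}$. The key geometric identity I would establish is ${\rm ex}\mathcal{S}=\{\omega\in\mathcal{S}\ :\ f_0(\omega)=\hat f_0(\omega)\}$. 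The inclusion ``$\subseteq$'' is Bauer's observation that $f(\omega)=\hat f(\omega)$ at every extreme point; for ``$\supseteq$'', if $\omega=\tfrac12(\omega'+\omega'')$ with $\omega'\ne\omega''$, then strict convexity together with concavity of $\hat f_0$ gives $f_0(\omega)<\tfrac12\big(f_0(\omega')+f_0(\omega'')\big)\le\tfrac12\big(\hat f_0(\omega')+\hat f_0(\omega'')\big)\le\hat f_0(\omega)$, so $\omega\notin{\rm ex}\mathcal{S}$.

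With this identity the equivalence in (2) reduces to a statement about a single function. Since $\hat f_0-f_0\ge0$ and vanishes precisely on ${\rm ex}\mathcal{S}$, for $\mu\in M_1(\mathcal{S})$ one has $\mu({\rm ex}\mathcal{S})=1$ iff $\mu(\hat f_0-f_0)=0$ iff $\mu(f_0)=\mu(\hat f_0)$. It then remains to match this with maximality in the order $\prec$, for which I would invoke the Choquet--Meyer criterion: in the metrizable case $\mu\in M_1^m(\mathcal{S})$ if and only if $\mu(f_0)=\mu(\hat f_0)$ for the fixed strictly convex $f_0$ (equivalently $\mu(f)=\mu(\hat f)$ for all $f\in C_{\mathbb{R}}(\mathcal{S})$). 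Establishing this criterion---relating the order-theoretic maximality, which a priori concerns all of $K(\mathcal{S})$, to the upper envelope of one strictly convex function---is the technical heart and the step I expect to be the main obstacle; it rests on a Hahn--Banach separation argument and on the identity $\mu(\hat f)=\sup\{\nu(f)\ :\ \nu\succ\mu\}$.

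Finally, for (3) I would argue by Zorn's lemma. Given $\varphi\in\mathcal{S}$, let $\delta_\varphi$ be the point mass at $\varphi$; the set $X:=\{\nu\in M_1(\mathcal{S})\ :\ \delta_\varphi\prec\nu\}$ is nonempty (it contains $\delta_\varphi$) and ordered by $\prec$. Every $\prec$-chain in $X$ has an upper bound, because $M_1(\mathcal{S})$ is weak$^*$-compact and, along an increasing chain, $\nu\mapsto\nu(f)$ increases and hence converges for each $f\in K(\mathcal{S})$, so any weak$^*$-limit point of the chain dominates it. A $\prec$-maximal element $\mu$ of $X$ is then maximal in all of $M_1(\mathcal{S})$ by transitivity of $\prec$, whence $\mu\in M_1^m(\mathcal{S})$. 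Moreover $\delta_\varphi\prec\mu$ forces $\mu(a)=a(\varphi)$ for every continuous affine $a$ (apply the definition of $\prec$ to $a$ and to $-a$, both of which lie in $K(\mathcal{S})$), so $b(\mu)=\varphi$, as required.
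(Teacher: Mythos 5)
The first thing to note is that the paper contains no proof of this theorem at all: it is stated as recalled background from the decomposition theory of \cite{br1}, \cite{krein} (it is Choquet's theorem for metrizable compact convex sets), so there is no proof of the authors' own to compare yours against. Judged on its own terms, your proposal follows the standard route of exactly those references. Parts (1) and (3) are complete and correct: the midpoint argument writing $\mathcal{S}\setminus{\rm ex}\mathcal{S}$ as the countable union of the compact images $F_n$ of the sets $D_n$ under the (continuous) midpoint map is the classical proof of the $G_\delta$ property, and the Zorn's lemma argument for (3) --- compactness of $M_1(\mathcal{S})$ giving upper bounds for $\prec$-chains, transitivity giving maximality in all of $M_1(\mathcal{S})$, and the application of $\prec$ to $a$ and $-a$ for affine $a$ giving $b(\mu)=\varphi$ --- is sound, and in fact does not even need metrizability.

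The gap is in (2), and you have flagged it yourself: everything is reduced to the Choquet--Meyer criterion ``$\mu\in M_1^m(\mathcal{S})$ iff $\mu(f_0)=\mu(\hat f_0)$'', which you invoke but do not prove. This is not a side technicality; it is the mathematical content of (2). The surrounding reduction (the identity ${\rm ex}\mathcal{S}=\{f_0=\hat f_0\}$ for your strictly convex $f_0$, and the observation that $\mu(\hat f_0-f_0)=0$ iff $\mu({\rm ex}\mathcal{S})=1$) is the easy part. To close the argument you need two ingredients. First, Mokobodzki's identity $\mu(\hat f)=\sup\{\nu(f):\nu\succ\mu\}$, obtained by Hahn--Banach applied to the sublinear functional $f\mapsto\mu(\hat f)$ and by checking that the dominated linear functional is a measure $\nu\succ\mu$; this yields the direction ``$\mu$ maximal $\Rightarrow$ $\mu(f)=\mu(\hat f)$ for all convex $f$ $\Rightarrow$ $\mu({\rm ex}\mathcal{S})=1$''. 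Second, for the converse, Bauer's observation applied to \emph{every} continuous convex $f$ (not only $f_0$) gives $\mu(f)=\mu(\hat f)$ whenever $\mu({\rm ex}\mathcal{S})=1$; combined with the sup-identity this shows that any $\nu\succ\mu$ satisfies $\nu(f)=\mu(f)$ for all convex $f$, and since differences of continuous convex functions are uniformly dense in $C_{\mathbb{R}}(\mathcal{S})$ (a Stone--Weierstrass lattice argument), $\nu=\mu$, i.e.\ $\mu$ is maximal. Note also that Bauer's observation itself rests on the envelope formula $\hat f(\omega)=\max\{\nu(f):b(\nu)=\omega\}$ plus the fact that the only measure with barycenter at an extreme point is the point mass --- two more Hahn--Banach-type steps you are silently assuming. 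As written, your treatment of (2) is a correct outline whose central step is outsourced to the very literature (\cite{krein}, \cite{qent}) the paper itself cites in lieu of a proof.
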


%Th of S2
\begin{theorem}
If $\mathcal{S}$ is a compact convex set ;
\begin{enumerate}
\item Any $\mu \in M_1^m (\mathcal{S})$ has ${\rm ex}\mathcal{S}$ as their pseudo-support (i.e. for any Bair sets $Q$ such that ${\rm ex}\mathcal{S} \subset Q \subset \mathcal{S}$, $\mu (Q) = 1$).
\item For any $\varphi \in \mathcal{S}$, there exist $\mu$ which satisfy (1) such that $\varphi = b (\mu)$. 
\end{enumerate}
\end{theorem}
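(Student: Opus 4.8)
The final statement is the Choquet--Bishop--de Leeuw theorem for a (possibly non-metrizable) compact convex set, and the plan is to run the upper-envelope machinery of Choquet theory \cite{br1}, \cite{qent} and then reduce the genuinely non-metrizable difficulty to the metrizable statement already recorded above. Write $A(\mathcal{S})$ for the continuous affine functions on $\mathcal{S}$, and for $f \in C_{\mathbb{R}}(\mathcal{S})$ define the upper envelope
\[
\hat f (x) := \inf \{ a(x) \ :\ a \in A(\mathcal{S}),\ a \ge f \}.
\]
First I would record the routine properties of $\hat f$: it is bounded, concave, upper semicontinuous (being an infimum of continuous functions), satisfies $\hat f \ge f$, and coincides with $f$ whenever $f$ is concave, the last point following from the existence of a supporting affine functional at each point via Hahn--Banach.

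The two analytic inputs are standard and I would cite them from Choquet theory. \emph{Maximality characterization:} a measure $\mu \in M_1(\mathcal{S})$ is $\prec$-maximal if and only if $\mu(\hat f) = \mu(f)$ for every convex $f \in C_{\mathbb{R}}(\mathcal{S})$; the nontrivial implication uses a Hahn--Banach construction producing a measure strictly above $\mu$ whenever $\mu(\hat f) > \mu(f)$. \emph{Geometric converse:} I only need that a non-extreme point is detected by some square of an affine function. Indeed, if $x \notin {\rm ex}\mathcal{S}$, say $x = \tfrac12(y+z)$ with $y \neq z$, choose $a \in A(\mathcal{S})$ with $a(y) \neq a(z)$ (Hahn--Banach); then concavity of $\widehat{a^2}$ together with $\widehat{a^2} \ge a^2$ and strict convexity of $t \mapsto t^2$ give $\widehat{a^2}(x) \ge \tfrac12\bigl(a^2(y)+a^2(z)\bigr) > a^2(x)$. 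Writing $g_a := \widehat{a^2} - a^2 \ge 0$, this shows $\mathcal{S}\setminus{\rm ex}\mathcal{S} \subseteq \bigcup_{a \in A(\mathcal{S})} \{ g_a > 0 \}$.

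For part (1), fix a maximal $\mu \in M_1^m(\mathcal{S})$. Applying the maximality characterization to the convex function $a^2$ yields $\mu(g_a) = 0$, hence $\mu(\{ g_a > 0 \}) = 0$ for every $a \in A(\mathcal{S})$. Given a Baire set $Q$ with ${\rm ex}\mathcal{S} \subset Q \subset \mathcal{S}$, the set $\mathcal{S}\setminus Q$ is a Baire set disjoint from ${\rm ex}\mathcal{S}$, hence covered by the family $\{ g_a > 0 \}_{a \in A(\mathcal{S})}$, and it remains to pass from ``each member is $\mu$-null'' to $\mu(\mathcal{S}\setminus Q) = 0$, i.e. $\mu(Q) = 1$. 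For part (2), given $\varphi \in \mathcal{S}$ I would apply Zorn's lemma to the weak*-compact set $M_\varphi := \{ \mu \in M_1(\mathcal{S}) : b(\mu) = \varphi \}$ ordered by $\prec$: every $\prec$-chain admits a weak*-limit point as an upper bound, since $\mu \mapsto \mu(f)$ is weak*-continuous and monotone along the chain, and since $b(\,\cdot\,)$ is weak*-continuous the limit remains in $M_\varphi$. Because any $\nu \succ \mu$ has $b(\nu) = b(\mu)$ (apply $\prec$ to $\pm a$ for $a \in A(\mathcal{S})$), a $\prec$-maximal element of $M_\varphi$ is maximal in all of $M_1(\mathcal{S})$; thus it lies in $M_1^m(\mathcal{S})$, represents $\varphi$, and by part (1) enjoys the pseudo-support property.

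The main obstacle is precisely the last measure-theoretic step of part (1): when $\mathcal{S}$ is not metrizable, ${\rm ex}\mathcal{S}$ need not be Baire measurable, the covering family $\{ g_a > 0 \}_a$ is uncountable with non-open members, and countable subadditivity does not apply directly. This is exactly where the non-metrizable case departs from the metrizable Choquet theorem recorded above, so I would reduce to that theorem: choose a countable family $(a_n) \subseteq A(\mathcal{S})$ generating $Q$ as a Baire set, push $\mu$ forward along the continuous affine map $x \mapsto (a_n(x))_n$ onto a metrizable compact convex set $\mathcal{S}_0$, verify that the pushforward $\nu$ is again maximal (maximality is preserved because convex functions pull back to convex functions compatibly with the envelope), invoke the metrizable statement $\nu({\rm ex}\,\mathcal{S}_0) = 1$, and transport this back through the map. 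Checking that maximality survives the pushforward and that the chosen countable subfamily genuinely captures $\mathcal{S}\setminus Q$ up to a $\mu$-null set is the delicate part; an alternative is a direct inner-regularity argument reducing $\mathcal{S}\setminus Q$ to compact $G_\delta$ pieces.
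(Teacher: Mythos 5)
The paper does not prove this statement at all: it is recalled as background (it is the classical Choquet--Bishop--de Leeuw theorem) from the cited literature \cite{br1}, \cite{krein}, so your attempt has to be measured against the standard proof. Your general machinery is the right one and matches that literature: the upper envelope $\hat f$, Mokobodzki's characterization of maximality ($\mu(\hat f)=\mu(f)$ for all convex continuous $f$), the detection of non-extreme points via $g_a=\widehat{a^2}-a^2>0$, and the Zorn's-lemma argument for part (2) are all correct; part (2) as you present it is essentially complete modulo part (1).

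Part (1), however, contains a genuine gap --- which you flag yourself --- and the repair you sketch rests on a false lemma. You propose to push $\mu$ forward along an affine continuous map $\pi=(a_n)_n$ onto a metrizable compact convex set $\mathcal{S}_0$ and to ``verify that the pushforward $\nu$ is again maximal.'' Maximality is \emph{not} preserved under affine continuous images. Take $\mathcal{S}=[0,1]^2$ and $\pi(x,y)=\frac{1}{2}(x+y)$: the uniform measure $\mu$ on the four vertices is maximal (it is carried by $\mathrm{ex}\,\mathcal{S}$ and $\mathcal{S}$ is metrizable, so the previous theorem applies), yet $\pi_*\mu$ places mass $\frac{1}{2}$ at the interior point $\frac{1}{2}$ of $[0,1]$ and hence is not maximal. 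The reason your heuristic (``convex functions pull back compatibly with the envelope'') fails is that the envelope computed in $\mathcal{S}$ satisfies only $\widehat{f\circ\pi}\le \hat f\circ\pi$, since affine functions on $\mathcal{S}$ dominating $f\circ\pi$ need not factor through $\pi$; in the example the inequality is strict at the corner $(1,0)$ (where $\widehat{f\circ\pi}$ equals $f\circ\pi$ because the corner is extreme, while $\hat f\circ\pi$ does not). So maximality of $\mu$ gives $\mu(\widehat{f\circ\pi})=\mu(f\circ\pi)$ but says nothing about $\mu(\hat f\circ\pi)$, which is what maximality of $\nu$ would require. The actual Bishop--de Leeuw argument, e.g.\ in Phelps \cite{krein}, is structured differently: one uses inner regularity of Baire measures with respect to compact $G_\delta$ sets to reduce part (1) to showing $\mu(K)=0$ for every compact $G_\delta$ set $K$ disjoint from $\mathrm{ex}\,\mathcal{S}$, and then runs a direct argument on such $K$; no metrizable-quotient reduction of the given Baire set $Q$ appears, precisely because of the obstruction above. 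Your one-line alternative (``a direct inner-regularity argument'') points in that correct direction, but it is not carried out, and it is exactly where the whole difficulty of the non-metrizable theorem lies.
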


\noindent Moreover, we have the following theorem for uniqueness of maximal measure $\mu$.\\

\noindent Let $\mathcal{X}$ be a locally convex Hausdorff space, $\mathcal{S}$ be a compact convex subset of $\mathcal{X}$, and $\mathcal{K}$ be a convex cone whose vortex is 0.  Furthermore, let $\mathcal{S}$ be the base of $\mathcal{K}$, i.e.
$$
\mathcal{K} = \{ \lambda \omega \ ;\ \lambda \geq 0,\ \omega \in \mathcal{S} \}.
$$
Then $\mathcal{K}$ is the convex cone generated by $\{ 1\} \times \mathcal{S}$. Defining
$$
\omega_1 \geq \omega_2 \overset{def}{\iff} \omega_1 - \omega_2 \in \mathcal{K},
$$
then $\geq$ gives an ordering on $\mathcal{K}$.

%Def of Shoquet 
\begin{definition}
If $\mathcal{K}$ is the lattice with respect to the above $\geq$, $\mathcal{S}$ is called {\it Choquet simplex}.
\end{definition}

%Th S Choquet Simplex
\begin{theorem}\label{ifS}
If $\mathcal{S}$ is compact convex, the following are equivalent:
\begin{enumerate}
\item $\mathcal{S}$ is a Choquet simplex.
\item For any $\varphi \in \mathcal{S}$, there exists a unique maximal probability measure $\mu$.
\end{enumerate}
\end{theorem}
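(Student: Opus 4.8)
The plan is to route the equivalence through the \emph{upper envelope} of continuous functions, which converts the geometric lattice condition on the cone $\mathcal{K}$ into an analytic statement directly comparable with representing measures. For $f \in C_{\mathbb{R}}(\mathcal{S})$ let $\bar f(\varphi) := \inf\{ a(\varphi) : a \in A(\mathcal{S}),\ a \ge f \}$ denote its upper (concave, upper semicontinuous) envelope, where $A(\mathcal{S})$ is the space of continuous affine functions. I would use two standard facts, both extractable from the barycentric/Krein--Mil'man material already recalled: that $\bar f(\varphi) = \sup\{ \nu(f) : \nu \in M_1(\mathcal{S}),\ b(\nu) = \varphi \}$, and that a representing measure $\mu$ of $\varphi$ is maximal for $\prec$ if and only if $\mu(f) = \mu(\bar f)$ for all $f \in C_{\mathbb{R}}(\mathcal{S})$. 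Since the theorems above already furnish a maximal representing measure for every $\varphi$, only its uniqueness is genuinely in question.

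The key step, and the one I expect to be the main obstacle, is a \emph{bridge lemma}: $\mathcal{S}$ is a Choquet simplex (equivalently $\mathcal{K}$ is a lattice) if and only if the upper envelope is additive, $\overline{f+g} = \bar f + \bar g$ for all $f,g \in C_{\mathbb{R}}(\mathcal{S})$. The inequality $\overline{f+g} \le \bar f + \bar g$ is immediate from the definition, so the whole content is the reverse inequality, and this is exactly where the order structure of $\mathcal{K}$ must be used: the lattice (equivalently, Riesz decomposition) property of $\mathcal{K}$ is what permits an affine majorant of $f+g$ to be split into affine majorants of $f$ and of $g$. Carrying out this splitting requires a Hahn--Banach separation performed inside the cone, and making the decomposition compatible with the cone order is the technical heart of the whole theorem.

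Granting the bridge lemma, the implication (1) $\Rightarrow$ (2) is short. Fix $\varphi$ and put $p(f) := \bar f(\varphi)$. Then $p$ is positively homogeneous and, by additivity of the envelope, linear; it is positive (since $\bar f \ge f \ge 0$ when $f \ge 0$) and satisfies $p(1)=1$, hence defines a probability measure. Any maximal representing measure $\mu$ of $\varphi$ obeys $\mu(f) = \mu(\bar f) \le \bar f(\varphi) = p(f)$; replacing $f$ by $-f$ and using linearity of $p$ gives $\mu(f) \ge p(f)$, so $\mu(f) = p(f)$ for every $f$. Thus every maximal representing measure coincides with $p$, which yields uniqueness (and $b(p) = \varphi$, since affine functions are their own envelopes).

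For (2) $\Rightarrow$ (1) I would argue by contraposition. If $\mathcal{S}$ is not a simplex, the bridge lemma supplies $f,g$ and a point $\varphi$ with $\overline{f+g}(\varphi) < \bar f(\varphi) + \bar g(\varphi)$, i.e., the sublinear functional $p(h) := \bar h(\varphi)$ is strictly subadditive there and hence not linear. By Hahn--Banach I would construct two linear functionals dominated by $p$ and agreeing with $\varphi$ on $A(\mathcal{S})$---one forced to take the value $\bar f(\varphi)$ on $f$, the other the value $\bar g(\varphi)$ on $g$---each of which is a probability measure representing $\varphi$, and then enlarge each in the ordering $\prec$ to a maximal representing measure. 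The strict gap $p(f+g) < p(f) + p(g)$ is what prevents the two from coinciding, contradicting (2). The delicate point here is ensuring the two constructed functionals really yield \emph{distinct} maximal measures rather than collapsing to a common one; this is handled by keeping track of the pinned values on $f$, $g$ and $f+g$ through the characterization $\mu(h) = \mu(\bar h)$ of maximality.
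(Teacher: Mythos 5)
Your proposal cannot be compared against a proof in the paper, because the paper gives none: this is the classical Choquet--Meyer uniqueness theorem, recalled as background and implicitly sourced from the cited lecture notes of Phelps. Judged on its own, your argument has a fatal flaw: the ``bridge lemma'' is false as you state it. Additivity of the upper envelope, $\overline{f+g} = \bar f + \bar g$ for \emph{all} $f,g \in C_{\mathbb{R}}(\mathcal{S})$, fails for every Choquet simplex with more than one point. Concretely, take $\mathcal{S} = [0,1]$ (a Choquet simplex), $f(t) = t^2$, $g = -f$: then $\overline{f+g} \equiv 0$, while $\bar f(t) + \bar g(t) = t - t^2 > 0$ on $(0,1)$. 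The structural reason is visible from your own fact (i): $\bar f(\varphi) = \sup\{\nu(f) : \nu \in M_1(\mathcal{S}),\ b(\nu)=\varphi\}$ is the support function of the compact convex set $M_{\varphi}(\mathcal{S})$ of \emph{all} representing measures, and a support function is additive on the whole function space precisely when that set is a singleton, i.e.\ when $M_{\varphi}(\mathcal{S}) = \{\delta_{\varphi}\}$, i.e.\ when $\varphi$ is an extreme point. So your lemma would assert that every point of a simplex is extreme. The same error sinks both implications: in (1) $\Rightarrow$ (2), the functional $p(f) := \bar f(\varphi)$ is genuinely sublinear and never linear at a non-extreme $\varphi$, and the intended conclusion of that step --- that every maximal representing measure equals one fixed functional, hence that $\varphi$ has a unique representing measure --- is false already for the barycenter of a triangle; in (2) $\Rightarrow$ (1), strict subadditivity of the envelope at some point occurs in \emph{every} nontrivial compact convex set, simplex or not, so it cannot be the criterion that produces two distinct maximal measures.

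The repair, which is exactly the Choquet--Meyer argument in Phelps, is to restrict the bridge lemma to the cone $K(\mathcal{S})$ of \emph{convex} continuous functions (already introduced in the paper): $\mathcal{S}$ is a simplex if and only if $\overline{f+g} = \bar f + \bar g$ for all $f,g \in K(\mathcal{S})$, equivalently if and only if $\bar f$ is affine for every $f \in K(\mathcal{S})$. Then (1) $\Rightarrow$ (2) runs: for maximal $\mu \in M_{\varphi}(\mathcal{S})$ and $f \in K(\mathcal{S})$, Mokobodzki's criterion gives $\mu(f) = \mu(\bar f)$, and since $\bar f$ is affine, bounded and upper semicontinuous, $\mu(\bar f) = \bar f(\varphi)$; thus $\mu$ is pinned on $K(\mathcal{S})$, hence on the subspace $K(\mathcal{S}) - K(\mathcal{S})$, which is a linear sublattice (note $\max(f-g,h-k) = \max(f+k,h+g) - (g+k)$) containing constants and separating points, hence uniformly dense in $C_{\mathbb{R}}(\mathcal{S})$ by the Stone--Weierstrass lattice theorem; so $\mu$ is unique. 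Your functional $p$ should likewise be defined only on $K(\mathcal{S})$, shown additive there, and extended to differences. The converse direction must also be run through the convex-function version of the lemma. In short: the architecture (envelopes, Mokobodzki maximality, Hahn--Banach) is the right one and is the standard route, but the class of functions on which additivity holds is the entire content of the theorem, and your version of it proves too much and is false.
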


\noindent Let $M_{\varphi} (\mathcal{S})$ be the set of all $\mu$ which is its barycenter equals to the state $\varphi$ on the $C^*$-algebra, i.e.
\begin{equation}\label{Mphi}
M_{\varphi} (\mathcal{S}) := \{ \mu \in M_1 (\mathcal{S}),\ b (\mu) = \varphi \}.
\end{equation}

%Barycebntric Dec
\noindent For $\varphi$ satisfying (\ref{Mphi}), one obtains the integral representation of $\varphi$:
\begin{equation}\label{bcdec}
\varphi =  \int_{\mathcal{S}} \omega d \mu (\omega).
\end{equation}
It is called the {\it barycentric decomposition} of $\varphi$. According to Theorem \ref{ifS}, this dcomposition is not unique unless $\mathcal{S}$ is a Choquet simplex.\\

%Orthogonality
\noindent Furthermore, we review the orthogonality of states. Let $\{ \mathcal{H}_{\varphi}, \pi_{\varphi}, x_{\varphi} \}$ be the GNS representation defined by $\varphi$. For $\varphi_1 ,\ \varphi_2 \in \mathfrak{S}$, set $\varphi := \varphi_1 + \varphi_2 \in \mathcal{A}_+^*$. Then the following are euivalent:
 \begin{enumerate}
\item Let $\psi \in \mathcal{A}_+^*$. If $\psi \le \varphi_1$ and  $\psi \le \varphi_2$, $\psi = 0$.
\item There exists a projection $E \in \pi_{\varphi} (\mathcal{A})'$ such that
\begin{eqnarray*}
\varphi_1 (A) &=& \< x_{\varphi}, E \pi_{\varphi} (A) x_{\varphi} \>,\\
\varphi_2 (A) &=& \< x_{\varphi}, (I - E) \pi_{\varphi} (A) x_{\varphi} \>.
\end{eqnarray*}
\item $\mathcal{H}_{\varphi} = \mathcal{H}_{\varphi_1} \oplus \mathcal{H}_{\varphi_2}$, $\pi_{\varphi} = \pi_{\varphi_1} \oplus \pi_{\varphi_2}$, $x_{\varphi} = x_{\varphi_1} \oplus x_{\varphi_2}$.
\end{enumerate}

%Def of orthogonality
\begin{definition}
The states $\varphi_1$, $\varphi_2$ satisfying the above conditions are called {\it mutually orthogonal} and denoted by $\varphi_1 \perp \varphi_2$.
\end{definition}

%Def of orthogonal measure
\begin{definition}
For any Borel sets $Q \subset \mathfrak{S}$ (i.e. $Q \in \mathcal{B}(\mathfrak{S})$), $\mu \in M (\mathfrak{S})$ satisfying
$$
\left( \int_Q \omega d \mu \right) \perp \left( \int_{\mathfrak{S}\backslash Q} \omega d \mu \right)
$$
is called {\it orthogonal measure} on $\mathfrak{S}$. 
\end{definition}

\noindent We define $\mathcal{O}_{\varphi} (\mathfrak{S})$ as the set of all orthogonal probability measures whose barycenters are $\varphi$.

%S-Miixing Entropy--------------------

\subsection{$\mathcal{S}$-Mixing Entropy}
If $\mu \in M_{\varphi} (\mathcal{S})$ has countable supports, that is, (\ref{bcdec}) can be written as 
\begin{equation}\label{count}
\varphi = \sum \lambda_k \varphi_k
\end{equation} 
where $\lambda_k > 0$ ; $\sum \lambda_k = 1$ and $\{ \varphi_k \} \subset {\rm ex} \mathcal{S}$, we denote the set of all such measures as $D_{\varphi} (\mathcal{S})$.

%Def of S mixing entropy
\begin{definition}\label{Smix}
Under the above assumptions, the entropy of $\varphi \in \mathcal{S}$ is given by
\begin{equation}
S^{\mathcal{S}} (\varphi ) := 
\begin{cases}
\inf \{ - \sum \lambda_k \log \lambda_k ;\ \mu = \{ \lambda_k \} \in D_{\varphi} (\mathcal{S}) \} \\
+ \infty \qquad (\mu \notin D_{\varphi} (\mathcal{S}))
\end{cases}
\end{equation}
\end{definition}

\noindent The above entropy is called {\it $\mathcal{S}$-mixing entropy}. Since  one can regard that the complexity of the system is $+ \infty$ if $\varphi$ has uncountable states, Ohya defined $S^{\mathcal{S}} (\varphi ) := + \infty \  (\mu \notin D_{\varphi} (\mathcal{S}))$.\\
\noindent $S^{\mathcal{S}} (\varphi )$ depends on the set $\mathcal{S}$ chosen, thus it represents the amount of complexity of the state measured from the reference system $\mathcal{S}$. That is, this entropy takes measuring the uncertainty of states from various reference systems into account. \\
\noindent Furthermore,  if $\varphi$ is faithful normal and $\mathcal{S} = \mathfrak{S}$, this entropy corresponds to von Neumann entropy \cite{smix}, \cite{qent}.\\

%Sreal
\noindent By the way, since one can regard that the complexities of  real physical systems are finite, we denote the subset of $\mathcal{S}$ as
$$
\mathcal{S}_r := \{ \varphi \in \mathcal{S}\ ;\ S^{\mathcal{S}} (\varphi) < \infty\}.
$$
Since $\mathcal{S} = \overline{co}^{w^*}{\rm ex} \mathcal{S}$, the following proposition holds.
\begin{proposition}
\begin{equation}
\bar{\mathcal{S}}^{w^*}_r = \mathcal{S}.
\end{equation}
\end{proposition}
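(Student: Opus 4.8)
The plan is to sandwich $\mathcal{S}_r$ between the convex hull of the extreme points and the whole set $\mathcal{S}$, and then pass to weak*-closures, closing the argument with the Krein-Mil'man identity $\overline{co}^{w^*}{\rm ex}\mathcal{S} = \mathcal{S}$ recorded above. The easy half is $\bar{\mathcal{S}}^{w^*}_r \subseteq \mathcal{S}$: by definition $\mathcal{S}_r \subseteq \mathcal{S}$, and since $\mathcal{S}$ is weak*-compact it is weak*-closed, so it already contains the weak*-closure of any of its subsets.

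The substantive direction is $\mathcal{S} \subseteq \bar{\mathcal{S}}^{w^*}_r$, and the key observation I would isolate first is that every finite convex combination of extreme points lies in $\mathcal{S}_r$. Indeed, suppose $\varphi = \sum_{k=1}^{n} \lambda_k \varphi_k$ with $\varphi_k \in {\rm ex}\mathcal{S}$, $\lambda_k > 0$, and $\sum_{k=1}^{n} \lambda_k = 1$. The associated atomic measure $\mu = \sum_{k=1}^{n} \lambda_k \delta_{\varphi_k}$ is supported on finitely many points of ${\rm ex}\mathcal{S}$ and has barycenter $b(\mu) = \sum_{k=1}^{n} \lambda_k \varphi_k = \varphi$, so $\mu \in D_{\varphi}(\mathcal{S})$ in the sense of (\ref{count}). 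Consequently $D_{\varphi}(\mathcal{S}) \neq \phi$ and, by Definition \ref{Smix}, $S^{\mathcal{S}}(\varphi) \leq -\sum_{k=1}^{n} \lambda_k \log \lambda_k < \infty$, because a finite sum of the finite quantities $-\lambda_k \log \lambda_k$ is finite. Hence the convex hull ${\rm co}({\rm ex}\mathcal{S})$ is contained in $\mathcal{S}_r$.

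To finish, I would take weak*-closures in the inclusion ${\rm co}({\rm ex}\mathcal{S}) \subseteq \mathcal{S}_r$. Since taking closures preserves inclusions, this gives $\overline{co}^{w^*}{\rm ex}\mathcal{S} \subseteq \bar{\mathcal{S}}^{w^*}_r$, and the Krein-Mil'man identity $\overline{co}^{w^*}{\rm ex}\mathcal{S} = \mathcal{S}$ then yields $\mathcal{S} \subseteq \bar{\mathcal{S}}^{w^*}_r$. Combining this with the easy inclusion of the first paragraph produces the claimed equality $\bar{\mathcal{S}}^{w^*}_r = \mathcal{S}$.

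I do not anticipate a genuine obstacle: the argument is essentially a two-sided inclusion plus Krein-Mil'man. The only points deserving care are the implicit use that $\mathcal{S}$ is weak*-compact (so that it is weak*-closed and the first inclusion holds), which is guaranteed in the present setting since $\mathcal{S}$ is a compact convex subset of a locally convex Hausdorff space, and the verification that the finite atomic measures genuinely land in $D_{\varphi}(\mathcal{S})$ rather than merely in $M_{\varphi}(\mathcal{S})$ — this is exactly where the countable-support (here finite-support) requirement of (\ref{count}) must be checked, and it holds trivially for finite combinations.
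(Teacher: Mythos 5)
Your proof is correct and follows the same route the paper intends: the paper justifies the proposition with the single remark that $\mathcal{S} = \overline{co}^{w^*}{\rm ex}\mathcal{S}$, and your argument simply fills in the details of that Krein-Mil'man step, observing that finite convex combinations of extreme points have finite $\mathcal{S}$-mixing entropy and hence lie in $\mathcal{S}_r$. Nothing in your write-up deviates from or adds beyond what the paper's own (terse) justification relies on.
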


%Renyi Entropy on The C^* algebra---------

\section{R\'enyi Entropy on $C^*$-Algebras}
In this section, we define R\'enyi entropy on $C^*$-algebras based on $\mathcal{S}$-mixing entropy and show that the introduced entropy includes $\mathcal{S}$-mixing entropy and quantum R\'enyi entropy as the special cases. Furthermore, by using our R\'enyi entropy, we investigate the uncertainty of states in different reference systems.

%Def of Renyi entropy
\begin{definition}
Under the same assumptions and notations with Definition \ref{Smix}, we define:
\begin{equation}\label{renyi}
S_{\alpha}^{\mathcal{S}} (\varphi) := \inf \left\{ (1 - \alpha )^{-1} \log \sum_k \lambda_k^{\alpha} \right\}
\quad ; \quad
\alpha \in [0, +\infty) \backslash \{ 1\}
\end{equation}
where the infimum is taken over all $\mu = \{ \lambda_k \} \in D_{\varphi} (\mathcal{S})$. Moreover, if $\mu \notin  D_{\varphi} (\mathcal{S})$, $S_{\alpha}^{\mathcal{S}} (\varphi) := \infty$.
\end{definition}
We call (\ref{renyi}) {\it $\mathcal{S}$-mixing R\'enyi entropy}.\\
From the analogue of classical case, one can see the following theorem:

%Th of decreasing 
\begin{theorem}
$S_{\alpha}^{\mathcal{S}} (\varphi)$ is monotone decreasing with respect to the parameter $\alpha$.
\end{theorem}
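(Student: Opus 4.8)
The plan is to reduce the statement to the classical monotonicity already recorded in Section~\ref{sec:2}, and then to verify that passing to the infimum over decompositions preserves this monotonicity. The guiding idea is that the dependence on $\alpha$ enters only through the fixed numerical distribution $\{\lambda_k\}$ attached to each decomposition, while the set $D_{\varphi}(\mathcal{S})$ over which we optimize does not depend on $\alpha$ at all.

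First I would fix an arbitrary decomposition $\mu=\{\lambda_k\}\in D_{\varphi}(\mathcal{S})$. Since $\lambda_k>0$ and $\sum_k\lambda_k=1$, the family $\{\lambda_k\}$ is a genuine probability distribution, so the quantity
\[
s_{\alpha}(\mu):=\frac{1}{1-\alpha}\log\sum_k\lambda_k^{\alpha}
\]
is exactly the classical R\'enyi entropy of that distribution. By the monotonicity noted in Section~\ref{sec:2} (namely $\partial S_{\alpha}/\partial\alpha\le 0$), the map $\alpha\mapsto s_{\alpha}(\mu)$ is monotone decreasing on $[0,+\infty)\backslash\{1\}$. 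Hence, for any $0\le\alpha_1<\alpha_2$ with $\alpha_1,\alpha_2\ne 1$, one has $s_{\alpha_1}(\mu)\ge s_{\alpha_2}(\mu)$ for \emph{every} fixed $\mu\in D_{\varphi}(\mathcal{S})$.

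The second step is the passage to the infimum. The key observation is that the inequality $s_{\alpha_1}(\mu)\ge s_{\alpha_2}(\mu)$ holds pointwise in $\mu$, and a pointwise inequality between two functions is inherited by their infima: for every $\mu$ we have $s_{\alpha_1}(\mu)\ge s_{\alpha_2}(\mu)\ge\inf_{\nu}s_{\alpha_2}(\nu)$, so the right-hand side is a lower bound for the set $\{s_{\alpha_1}(\mu)\}$, whence $\inf_{\mu}s_{\alpha_1}(\mu)\ge\inf_{\nu}s_{\alpha_2}(\nu)$. By the definition (\ref{renyi}) this reads $S_{\alpha_1}^{\mathcal{S}}(\varphi)\ge S_{\alpha_2}^{\mathcal{S}}(\varphi)$, which is precisely the asserted monotonicity. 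I would close by noting that the degenerate case is harmless: if $D_{\varphi}(\mathcal{S})=\emptyset$ (equivalently, $\varphi$ admits no countable extremal decomposition), then $S_{\alpha}^{\mathcal{S}}(\varphi)=+\infty$ for all $\alpha$, so the function is trivially and constantly decreasing.

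I do not expect a genuine obstacle here, since the result is essentially the classical monotonicity transported through an infimum; the only point requiring care is that one must \emph{not} assume a single optimal decomposition works simultaneously for $\alpha_1$ and $\alpha_2$. Avoiding that pitfall is exactly why the argument is organized as ``pointwise monotonicity first, infimum second'' rather than by comparing optimizers. If one wished to make the proof self-contained instead of citing the derivative remark, the only mildly technical ingredient would be verifying $\partial s_{\alpha}(\mu)/\partial\alpha\le 0$ directly, i.e. that $\log(\sum_k\lambda_k^{\alpha})/(1-\alpha)$ is nonincreasing; this follows from the nonnegativity of a relative-entropy-type expression produced upon differentiation, but it is standard and can simply be quoted.
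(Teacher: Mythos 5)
Your proof is correct and takes essentially the same route as the paper, which offers no argument beyond invoking ``the analogue of the classical case'': your two-step organization (pointwise classical monotonicity of $\alpha\mapsto s_{\alpha}(\mu)$ for each fixed decomposition $\mu$, then passing the inequality through the infimum over $D_{\varphi}(\mathcal{S})$) is precisely the intended argument, and your handling of the degenerate case $D_{\varphi}(\mathcal{S})=\emptyset$ is sound. The only refinement worth noting is that $D_{\varphi}(\mathcal{S})$ admits countably infinite distributions $\{\lambda_k\}$, for which $\sum_k\lambda_k^{\alpha}$ may diverge when $0\le\alpha<1$; monotonicity still holds with the convention that such values are $+\infty$, since $\lambda_k^{\alpha_1}\ge\lambda_k^{\alpha_2}$ for $\alpha_1<\alpha_2$ forces divergence to propagate only toward smaller $\alpha$.
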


\noindent Furthermore, in analogy with the classical case, we have the following theorem.

%Theorem Renyi and S-mix
\begin{theorem}
For any $\varphi \in \mathcal{S}$,
\begin{equation}\label{th}
\lim_{\alpha \to 1} S_{\alpha}^{\mathcal{S}} (\varphi) = S^{\mathcal{S}} (\varphi)
\end{equation}
holds.
\end{theorem}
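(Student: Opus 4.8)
The plan is to reduce the statement to the classical limit relation (Theorem 1) applied to each fixed weight sequence; the only real work is to justify interchanging the infimum over decompositions with the limit $\alpha\to 1$. Throughout, for a fixed $\mu=\{\lambda_k\}\in D_{\varphi}(\mathcal{S})$ write
\[
H_{\alpha}(\mu):=(1-\alpha)^{-1}\log\sum_k\lambda_k^{\alpha},\qquad H(\mu):=-\sum_k\lambda_k\log\lambda_k,
\]
so that $S_{\alpha}^{\mathcal{S}}(\varphi)=\inf_{\mu}H_{\alpha}(\mu)$ and $S^{\mathcal{S}}(\varphi)=\inf_{\mu}H(\mu)$, both infima being taken over $D_{\varphi}(\mathcal{S})$. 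Theorem 1 gives $\lim_{\alpha\to 1}H_{\alpha}(\mu)=H(\mu)$ for each fixed $\mu$, and the remark in Subsection 2.1 gives that $\alpha\mapsto H_{\alpha}(\mu)$ is decreasing with $H_{\alpha}(\mu)\to H(\mu)$ at $\alpha=1$.

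First I would dispose of the easy inequality. For any fixed $\mu_{0}\in D_{\varphi}(\mathcal{S})$ one has $S_{\alpha}^{\mathcal{S}}(\varphi)\le H_{\alpha}(\mu_{0})$, and letting $\alpha\to 1$ with Theorem 1 gives $\limsup_{\alpha\to 1}S_{\alpha}^{\mathcal{S}}(\varphi)\le H(\mu_{0})$; taking the infimum over $\mu_{0}$ yields
\[
\limsup_{\alpha\to 1}S_{\alpha}^{\mathcal{S}}(\varphi)\le S^{\mathcal{S}}(\varphi).
\]
This half is valid from both sides and requires no further hypotheses. Next I would treat the limit from the left. For $\alpha<1$ the monotonicity recalled above gives $H_{\alpha}(\mu)\ge H(\mu)$ for every $\mu$ (the decreasing function stays above its value at $1$), hence $S_{\alpha}^{\mathcal{S}}(\varphi)=\inf_{\mu}H_{\alpha}(\mu)\ge\inf_{\mu}H(\mu)=S^{\mathcal{S}}(\varphi)$. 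Combining this with the $\limsup$ bound forces $\lim_{\alpha\uparrow 1}S_{\alpha}^{\mathcal{S}}(\varphi)=S^{\mathcal{S}}(\varphi)$.

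The remaining and genuinely delicate point is the limit from the right, $\alpha\downarrow 1$. Here the same monotonicity gives only $H_{\alpha}(\mu)\le H(\mu)$, hence $S_{\alpha}^{\mathcal{S}}(\varphi)\le S^{\mathcal{S}}(\varphi)$; since $\alpha\mapsto S_{\alpha}^{\mathcal{S}}(\varphi)$ is decreasing, the right limit $L^{+}:=\lim_{\alpha\downarrow 1}S_{\alpha}^{\mathcal{S}}(\varphi)$ exists and satisfies $L^{+}\le S^{\mathcal{S}}(\varphi)$, so what must be excluded is a downward jump $L^{+}<S^{\mathcal{S}}(\varphi)$. This is precisely an interchange of $\inf_{\mu}$ with $\lim_{\alpha\downarrow 1}$, and it is the main obstacle: for $\alpha>1$ the near-minimizing decompositions may drift with $\alpha$ toward ever more spread-out weight sequences on which $H_{\alpha}$ stays well below $H$, so a purely pointwise argument fails.

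To close this gap my plan is to show that the infimum can be realized by a single decomposition $\mu^{*}$ that does not depend on $\alpha$. Concretely, I would exploit the majorization structure of the barycentric decompositions of $\varphi$ together with the Schur-concavity of both $H$ and $H_{\alpha}$ (valid for every $\alpha>0$): the minimizing weight sequence is the one majorized by all competitors, and such a sequence minimizes $H$ and $H_{\alpha}$ simultaneously. Then $S_{\alpha}^{\mathcal{S}}(\varphi)=H_{\alpha}(\mu^{*})$ for all $\alpha$ near $1$, and the conclusion follows at once from Theorem 1 applied to the fixed sequence $\{\lambda_k^{*}\}$. In the quantum case $\mathcal{S}=\mathfrak{S}$ this is transparent, the spectral (Schatten) decomposition being simultaneously optimal for every $\alpha$. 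For a general reference system $\mathcal{S}$ the existence and $\alpha$-independence of $\mu^{*}$ is the step I expect to demand the most care; failing an explicit minimizer, I would instead secure the interchange by a Sion-type minimax argument for $(\alpha,\mu)\mapsto(\alpha-1)^{-1}\log\sum_k\lambda_k^{\alpha}$, using convexity in $\alpha$ and the convexity of $D_{\varphi}(\mathcal{S})$ in $\mu$ to turn $\inf_{\alpha}\sup_{\mu}$ into $\sup_{\mu}\inf_{\alpha}$ and thereby recover $L^{+}\ge S^{\mathcal{S}}(\varphi)$.
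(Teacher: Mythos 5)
Your split of the problem is accurate, and the two easy halves are done correctly: the bound $\limsup_{\alpha\to 1}S_{\alpha}^{\mathcal{S}}(\varphi)\le S^{\mathcal{S}}(\varphi)$ and the left limit $\alpha\uparrow 1$ (via $H_{\alpha}(\mu)\ge H(\mu)$ for $\alpha<1$) are complete and rigorous. The genuine gap is exactly where you located it, and neither of your two strategies closes it. In the majorization route, first, the direction is stated backwards: since every $H_{\alpha}$ is Schur-concave, a simultaneous minimizer must \emph{majorize} all competing weight sequences, not be majorized by them (in the density case this is the Schatten eigenvalue sequence, which majorizes the weights of every pure-state decomposition --- this is precisely the content of the paper's Lemma 1). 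More seriously, the existence of such a dominating decomposition is a special feature of $\mathcal{S}=\mathfrak{S}$; for a general compact convex reference system you give no argument, and none is available from the paper. The Sion fallback fails on its hypotheses: you need $\sup_{\alpha>1}\inf_{\mu}H_{\alpha}(\mu)=\inf_{\mu}\sup_{\alpha>1}H_{\alpha}(\mu)$, which requires quasi-convexity of $\mu\mapsto H_{\alpha}(\mu)$ on $D_{\varphi}(\mathcal{S})$; but mixing two decompositions produces the weight list $\{t\lambda_k\}\cup\{(1-t)\lambda'_j\}$, and (taking $\mu,\nu$ to be two distinct one-point decompositions, each with $H_{\alpha}=0$) the mixture has $H_{\alpha}>0$, so sublevel sets are not convex and Sion's theorem does not apply.

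For comparison, the paper's own proof is the same sandwich you set up, with the interchange handled by the inequality $0\le\inf_{\mu}H(\mu)-\inf_{\mu}H_{\alpha}(\mu)\le\sup_{\mu}\bigl(H(\mu)-H_{\alpha}(\mu)\bigr)$ for $\alpha>1$ (and its mirror image for $\alpha<1$), followed by the assertion that the right-hand side tends to $0$ because of the pointwise limit $H_{\alpha}(\mu)\to H(\mu)$. That assertion is itself an unjustified exchange of $\sup_{\mu}$ with $\lim_{\alpha\to 1}$: pointwise convergence gives no uniform control over $D_{\varphi}(\mathcal{S})$, and in fact $\sup_{\mu}\bigl(H(\mu)-H_{\alpha}(\mu)\bigr)=+\infty$ for every $\alpha>1$ whenever $\varphi$ admits decompositions of arbitrarily large mixing entropy (a rank-two density operator already does, by splitting weight across many distinct pure states). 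So the obstacle you flagged as the main one is precisely the step the paper glosses over; your write-up is more candid (and your $\alpha\uparrow 1$ argument is actually tighter than the paper's), but as it stands neither your proposal nor the paper proves the $\alpha\downarrow 1$ half for general $\mathcal{S}$; only the case $\mathcal{S}=\mathfrak{S}$ is secured, via the majorization fact in Lemma 1.
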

%proof
\begin{proof}
According to the classical case, for $\mu \in D_{\varphi} (\mathcal{S})$,
\begin{equation}\label{clanalogue}
\lim_{\alpha \to 1} (1 - \alpha)^{-1} \log \sum_k \lambda_k^{\alpha} = - \sum_k \lambda_k \log \lambda_k
\end{equation}
holds. We shall denote $\tilde{S}_{\alpha}^{\mathcal{S}} (\varphi) := (1 - \alpha)^{-1} \log \sum_k \lambda_k^{\alpha}$, $\tilde{S}^{\mathcal{S}} (\varphi) :=  - \sum_k \lambda_k \log \lambda_k$. 
%From the classical case, for any $\alpha > 1$, $\tilde{S}^{\mathcal{S}} (\varphi) > \tilde{S}_{\alpha}^{\mathcal{S}} (\varphi)$ is satisfied. 
Then we have
$$
0 \le \inf_{\{\lambda_k\}} \tilde{S}^{\mathcal{S}} (\varphi) - \inf_{\{ \lambda'_k \}} \tilde{S}_{\alpha}^{\mathcal{S}} (\varphi) = \sup (- \tilde{S}_{\alpha}^{\mathcal{S}} (\varphi)) - \sup (-\tilde{S}^{\mathcal{S}} (\varphi))
$$
\begin{equation}\label{le}
\le \sup ( \tilde{S}^{\mathcal{S}} (\varphi) - \tilde{S}_{\alpha}^{\mathcal{S}} (\varphi)) \quad ,\quad \forall \alpha > 1.
\end{equation}
\begin{equation}\label{le2}
0 \le  \inf_{\{ \lambda'_k \}} \tilde{S}_{\alpha}^{\mathcal{S}} (\varphi) -  \inf_{\{\lambda_k\}} \tilde{S}^{\mathcal{S}} (\varphi) \le \sup (\tilde{S}_{\alpha}^{\mathcal{S}} (\varphi) -  \tilde{S}^{\mathcal{S}} (\varphi)) \quad ,\quad 0\le \forall \alpha < 1.
\end{equation}
Due to (\ref{clanalogue}), the right hand sides of (\ref{le}) and (\ref{le2}) go to $0$ when $\alpha \to 1$. Therefore we obtain the theorem.

\end{proof}

\noindent Now we prove that our $\mathcal{S}$-mixing R\'enyi entropy includes the density case \cite{petz}, \cite{qent}. Let $\mathbf{T} (\mathcal{H})$ be the set of all trace class operators on a Hilbert space $\mathcal{H}$, and $\mathbf{T} (\mathcal{H})_{+,1} := \{ A \in \mathbf{T}(\mathcal{H})\ ;\ \mathrm{Tr}A = 1\}$.

%Def Renyi of density
\begin{definition}
For any $\rho \in \mathbf{T}(\mathcal{H})_{+,1}$ and any $\alpha \in [0, +\infty) \backslash \{ 1 \}$, the {\it quantum R\'enyi entropy} is defined by
\begin{equation}
S_{\alpha} (\rho) := (1 - \alpha )^{-1} \log {\rm Tr} \rho^{\alpha}.
\end{equation}
\end{definition}

%Lemma of inf

\begin{lemma}\label{lem1}
Let $\rho = \sum_n \lambda_n \rho_n$ be the decomposition into pure states (i.e. $\dim (\mathrm{ran} \rho_n) = 1$). For any $\alpha > 1$,
\begin{equation}\label{lem}
S_{\alpha} (\rho) \le (1 - \alpha)^{-1} \log \sum_n \lambda_n^{\alpha}
\end{equation}
holds. If $\rho_n \perp \rho_m \ (n \neq m)$, one obtains the equality.
\end{lemma}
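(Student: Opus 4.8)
The plan is to reduce the claimed inequality to a statement about the eigenvalues of $\rho$ and then invoke majorization. Since $\alpha > 1$ we have $(1-\alpha)^{-1} < 0$, and $\log$ is increasing, so the asserted inequality
$$
S_{\alpha} (\rho) = (1-\alpha)^{-1}\log \mathrm{Tr}\,\rho^{\alpha} \le (1-\alpha)^{-1}\log \sum_n \lambda_n^{\alpha}
$$
is equivalent to the \emph{reversed} inequality for the arguments, namely $\mathrm{Tr}\,\rho^{\alpha} \ge \sum_n \lambda_n^{\alpha}$. Writing the spectral decomposition $\rho = \sum_j \mu_j |e_j\rangle\langle e_j|$ with eigenvalues $\mu_j \ge 0$ and $\sum_j \mu_j = 1$, we have $\mathrm{Tr}\,\rho^{\alpha} = \sum_j \mu_j^{\alpha}$, so it suffices to prove $\sum_j \mu_j^{\alpha} \ge \sum_n \lambda_n^{\alpha}$.

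Next I would compare the eigenvalue vector $\mu = (\mu_j)$ with the weight vector $\lambda = (\lambda_n)$ of the given pure-state decomposition. Since $\dim(\mathrm{ran}\,\rho_n) = 1$ we may write $\rho_n = |\psi_n\rangle\langle\psi_n|$ for unit vectors $\psi_n$, so $\rho = \sum_n \lambda_n |\psi_n\rangle\langle\psi_n|$, and both $\mu$ and $\lambda$ are probability vectors because $\sum_n \lambda_n = \mathrm{Tr}\,\rho = \sum_j \mu_j = 1$. Setting $|a_n\rangle = \sqrt{\lambda_n}\,|\psi_n\rangle$ and $|b_j\rangle = \sqrt{\mu_j}\,|e_j\rangle$, the identity $\sum_n |a_n\rangle\langle a_n| = \rho = \sum_j |b_j\rangle\langle b_j|$ forces a relation $|a_n\rangle = \sum_j U_{nj}\,|b_j\rangle$, where $(U_{nj})$ has orthonormal columns (the unitary freedom of pure-state ensembles representing the same density operator), completed to a unitary matrix by padding with zeros. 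Taking squared norms gives $\lambda_n = \sum_j |U_{nj}|^2 \mu_j$, and since $D_{nj} := |U_{nj}|^2$ is doubly stochastic this exhibits $\lambda = D\mu$. By the Hardy--Littlewood--P\'olya characterization this means precisely that $\mu$ majorizes $\lambda$.

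To finish, I would use the convexity of $t \mapsto t^{\alpha}$ on $[0,\infty)$ (valid because $\alpha > 1$): the symmetric function $x \mapsto \sum_i x_i^{\alpha}$ is then Schur-convex, so the majorization of $\lambda$ by $\mu$ yields $\sum_j \mu_j^{\alpha} \ge \sum_n \lambda_n^{\alpha}$, which is exactly the reduced inequality. For the equality statement, if $\rho_n \perp \rho_m$ for $n \ne m$ then the unit vectors $\psi_n$ are mutually orthogonal, so $\rho = \sum_n \lambda_n |\psi_n\rangle\langle\psi_n|$ is itself a spectral decomposition; hence the family $\{\lambda_n\}$ coincides with the eigenvalue family $\{\mu_j\}$ up to additional zeros, and $\mathrm{Tr}\,\rho^{\alpha} = \sum_n \lambda_n^{\alpha}$ holds with equality.

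The step I expect to be the main obstacle is the second one — establishing the majorization cleanly. Two points need care: justifying the doubly stochastic (unitary) link between the two ensembles, and handling the possibly countable and unequal numbers of terms, which requires padding with zeros and checking convergence of the relevant series in the infinite-dimensional setting. An alternative that bypasses general majorization, at the cost of covering only $\alpha \ge 2$, is the direct estimate $\mathrm{Tr}\,\rho^{\alpha} = \sum_n \lambda_n \langle\psi_n|\rho^{\alpha-1}|\psi_n\rangle \ge \sum_n \lambda_n \langle\psi_n|\rho|\psi_n\rangle^{\alpha-1} \ge \sum_n \lambda_n^{\alpha}$, where the first inequality is Jensen's operator inequality for the convex function $t^{\alpha-1}$ and the last uses $\langle\psi_n|\rho|\psi_n\rangle \ge \lambda_n$; the majorization route is what secures the full range $\alpha > 1$.
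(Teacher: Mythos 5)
Your proof is correct and takes essentially the same route as the paper: both arguments reduce the inequality to the fact that the eigenvalue sequence of $\rho$ majorizes the weight sequence $(\lambda_n)$, then apply the convexity of $t \mapsto t^{\alpha}$ for $\alpha > 1$ together with the sign flip from $(1-\alpha)^{-1} < 0$, and both handle the equality case by observing that an orthogonal pure-state decomposition is a spectral (Schatten) decomposition. The only difference is that the paper quotes the majorization (partial-sum) inequality $\sum_{k=1}^n p_k \ge \sum_{k=1}^n \lambda_k$ from its reference (Umegaki--Ohya), whereas you derive it from the unitary freedom of pure-state ensembles via a doubly stochastic matrix --- a more self-contained treatment of the same key step, not a different method.
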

%proof
\begin{proof}
Let $\rho = \sum_k p_k E_k$ be the Schatten decomposition \cite{sch} of $\rho$. Then for any $n \in \mathbb{N}$,
$$
\sum_{k=1}^n p_k \geq \sum_{k=1}^n \lambda_n
$$
is satisfied \cite{qent}. Therefore we have $\sum_{k=1}^n p_k^{\alpha} \geq \sum_{k=1}^n \lambda_n^{\alpha}$ ($\forall \alpha \in [0, +\infty) \backslash \{ 1\}$). Moreover, according to the monotonicity of $\log$,
$$
(1 - \alpha)^{-1} \log \sum_{k=1}^n p_k^{\alpha} \le (1 - \alpha)^{-1} \log \sum_{k=1}^n \lambda_k^{\alpha} \quad , \quad  \forall \alpha > 1.
$$
Since $0 \le \sum_{k=1}^n p_k^{\alpha} < 1$ (resp. $0 \le \sum_{k=1}^n \lambda_n^{\alpha} < 1$), there exists the limit : $\displaystyle \lim_{n \to \infty} \log \sum_{k=1}^n p_k^{\alpha}$ (resp. $\displaystyle \lim_{n \to \infty} \log \sum_{k=1}^n \lambda_k^{\alpha}$). Thus we have
$$
(1 - \alpha)^{-1} \log \sum_{k=1}^{\infty} p_k^{\alpha} \le (1 - \alpha)^{-1} \log \sum_{k=1}^{\infty} \lambda_k^{\alpha} \quad , \quad  \forall \alpha > 1.
$$
This gives the inequality (\ref{lem}). \\
Moreover, if $\rho_n \perp \rho_m\ (n \neq m)$, $\rho = \sum_n \lambda_n \rho_n$ becomes the Schatten decomposition of $\rho$. Thus $\lambda_n = p_n$. Therefore
$$
S_{\alpha} (\rho) = (1 - \alpha)^{-1} \log \sum_n \lambda_n^{\alpha}.
$$
\end{proof}

\noindent Using this lemma, we prove the following theorem.

%Th Renyi of \rho
\begin{theorem}
 Let $\mathcal{A}$ be a $C^*$-algebra. If a state $\varphi$ can be written as $\varphi (A) = \mathrm{Tr} \rho A\ (\forall A \in \mathcal{A})$, 
\begin{equation}
S_{\alpha}^{\mathfrak{S}} (\varphi) = S_{\alpha} (\rho) \quad , \quad \forall \alpha > 1,
\end{equation}
where $\mathfrak{S}$ is the set of all states on $\mathcal{A}$.
\end{theorem}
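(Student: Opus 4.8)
The plan is to establish the claimed identity as two opposite inequalities, both extracted from Lemma \ref{lem1}, after setting up the dictionary between the abstract extremal decompositions of the state $\varphi$ and the pure-state decompositions of the density operator $\rho$. The key observation is that a countable measure $\mu = \{\lambda_k\} \in D_\varphi(\mathfrak{S})$ encodes a representation $\varphi = \sum_k \lambda_k \varphi_k$ with $\varphi_k \in \mathrm{ex}\,\mathfrak{S}$, and since each $\lambda_k \varphi_k \le \varphi$ is dominated by the normal functional $A \mapsto \mathrm{Tr}\,\rho A$, every $\varphi_k$ is itself normal and extreme, hence given by a one-dimensional projection $\rho_k$; conversely, such families of rank-one $\rho_k$ furnish elements of $D_\varphi(\mathfrak{S})$. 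In this way every $\mu \in D_\varphi(\mathfrak{S})$ corresponds to a decomposition $\rho = \sum_k \lambda_k \rho_k$ into pure states carrying the same weights $\{\lambda_k\}$.

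For the inequality $S_\alpha(\rho) \le S_\alpha^{\mathfrak{S}}(\varphi)$, I would fix an arbitrary $\mu = \{\lambda_k\} \in D_\varphi(\mathfrak{S})$, pass to the associated pure-state decomposition $\rho = \sum_k \lambda_k \rho_k$, and apply the inequality \eqref{lem} of Lemma \ref{lem1} to obtain $S_\alpha(\rho) \le (1-\alpha)^{-1}\log\sum_k \lambda_k^{\alpha}$ for every $\alpha > 1$. Since this bound holds for each admissible $\mu$, taking the infimum over $D_\varphi(\mathfrak{S})$ on the right-hand side yields $S_\alpha(\rho) \le S_\alpha^{\mathfrak{S}}(\varphi)$.

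For the reverse inequality $S_\alpha^{\mathfrak{S}}(\varphi) \le S_\alpha(\rho)$, I would use a distinguished decomposition, namely the Schatten decomposition $\rho = \sum_k p_k E_k$ into mutually orthogonal one-dimensional projections $E_k$. The corresponding weights $\{p_k\}$ form a legitimate element of $D_\varphi(\mathfrak{S})$, and because the $E_k$ are mutually orthogonal the equality clause of Lemma \ref{lem1} applies, giving $(1-\alpha)^{-1}\log\sum_k p_k^{\alpha} = S_\alpha(\rho)$. As $S_\alpha^{\mathfrak{S}}(\varphi)$ is an infimum over all of $D_\varphi(\mathfrak{S})$, it is bounded above by the value attained at this particular measure, i.e. $S_\alpha^{\mathfrak{S}}(\varphi) \le S_\alpha(\rho)$. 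Combining the two inequalities gives the theorem.

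The main obstacle I anticipate is not the analytic estimate, which is entirely absorbed into Lemma \ref{lem1}, but the decomposition dictionary of the first paragraph: verifying that the extremal decompositions of $\varphi$ as a state functional are matched, weight for weight, with the pure-state decompositions of $\rho$. The delicate points are (i) that every extreme state occurring with positive weight in a decomposition of the normal state $\varphi$ is automatically normal and therefore corresponds to a density operator, and (ii) that such a normal extreme state is a one-dimensional projection, so that the abstract notion of extremality in $\mathrm{ex}\,\mathfrak{S}$ coincides with the rank-one condition $\dim(\mathrm{ran}\,\rho_k) = 1$ required by Lemma \ref{lem1}. Once this correspondence is pinned down, the restriction $\alpha > 1$ is precisely what makes Lemma \ref{lem1} available in both its inequality and its equality form, and the remainder is the two-sided infimum argument above.
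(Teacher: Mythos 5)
Your proposal is correct and takes essentially the same route as the paper: both rest on the correspondence between extremal decompositions of $\varphi$ and pure-state decompositions of $\rho$, using the inequality of Lemma \ref{lem1} to bound the infimum below by $S_{\alpha}(\rho)$ and the equality clause (realized by the Schatten decomposition, which lies in $D_{\varphi}(\mathfrak{S})$) to show the infimum is attained. Your write-up is in fact more careful than the paper's, which leaves the two-inequality structure and the normality of the extreme components $\varphi_k$ implicit.
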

\begin{proof}
Let $\rho = \sum_k \lambda_k \rho_k$ be the decomposition into pure states $\rho_k$ (i.e. $\rho_k^2 = \rho_k,\ \forall k $). Denoting
$$
\varphi_k (A) = \mathrm{Tr} \rho_k A\ (\forall A \in \mathcal{A}),
$$
then $\varphi = \sum \lambda_k \varphi_k$ is the extremal decomposition. 
Furthermore, if $\varphi \in {\rm ex}\mathfrak{S}$, $\rho$ is a pure state (i.e. $\rho = \rho^2$). Therefore according to Lemma \ref{lem1},
$$
 S_{\alpha} (\varphi) = \inf \{(1 - \alpha )^{-1} \log \sum \lambda_k^{\alpha} \} = S_{\alpha} (\rho)
$$
holds.
\end{proof}

\noindent
Therefore, if $\alpha > 1$, $\mathcal{S}$-mixing R\'enyi entropy includes the quantum R\'enyi entropy as the special case. If $0 \le \alpha < 1 $, the following inequality holds.
%Th 0\le \alpha < 1
\begin{theorem}
Under the above settings, for any $0\le \alpha<1$,
\begin{equation}\label{th01}
S_{\alpha}^{\mathfrak{S}} (\varphi) \le S_{\alpha}(\rho).
\end{equation}
\end{theorem}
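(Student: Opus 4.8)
The plan is to mirror the structure of the proof of the preceding theorem, replacing the reversed inequalities with those appropriate to the regime $0 \le \alpha < 1$. The crucial observation is that in Lemma \ref{lem1} we established the majorization relation $\sum_{k=1}^n p_k^{\alpha} \ge \sum_{k=1}^n \lambda_k^{\alpha}$ for \emph{all} $\alpha \in [0, +\infty) \backslash \{1\}$, not merely for $\alpha > 1$. The only place where the restriction $\alpha > 1$ entered Lemma \ref{lem1} was in applying the monotonicity of $\log$ together with the sign of the prefactor $(1-\alpha)^{-1}$. Thus the first step is to revisit that inequality: for $0 \le \alpha < 1$ the factor $(1-\alpha)^{-1}$ is \emph{positive}, so from $\sum_{k} p_k^{\alpha} \ge \sum_{k} \lambda_k^{\alpha}$ and the monotonicity of $\log$ we obtain the reversed conclusion
\begin{equation*}
(1-\alpha)^{-1} \log \sum_k p_k^{\alpha} \ge (1-\alpha)^{-1} \log \sum_k \lambda_k^{\alpha}, \qquad 0 \le \alpha < 1,
\end{equation*}
that is, $S_{\alpha}(\rho) \ge (1-\alpha)^{-1} \log \sum_k \lambda_k^{\alpha}$ for every decomposition of $\rho$ into pure states.

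The second step is to pass from this per-decomposition estimate to the infimum defining $S_{\alpha}^{\mathfrak{S}}(\varphi)$. As in the $\alpha > 1$ theorem, write $\rho = \sum_k \lambda_k \rho_k$ for an arbitrary pure-state decomposition and set $\varphi_k(A) = \mathrm{Tr}\,\rho_k A$, so that $\varphi = \sum_k \lambda_k \varphi_k$ is an extremal decomposition belonging to $D_{\varphi}(\mathfrak{S})$. The inequality from the first step says that for each such decomposition the quantity $(1-\alpha)^{-1}\log\sum_k \lambda_k^{\alpha}$ is bounded above by $S_{\alpha}(\rho)$. Taking the infimum over all $\mu = \{\lambda_k\} \in D_{\varphi}(\mathfrak{S})$ on the left-hand side preserves the upper bound, giving directly
\begin{equation*}
S_{\alpha}^{\mathfrak{S}}(\varphi) = \inf_{\{\lambda_k\}} (1-\alpha)^{-1} \log \sum_k \lambda_k^{\alpha} \le S_{\alpha}(\rho),
\end{equation*}
which is exactly (\ref{th01}).

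The main obstacle I anticipate is not the algebra but a subtlety in the infimum step: one must be certain that the Schatten decomposition of $\rho$ itself furnishes an admissible element of $D_{\varphi}(\mathfrak{S})$, so that the infimum is genuinely taken over a family containing the orthogonal decomposition for which equality with $S_{\alpha}(\rho)$ would otherwise hold. Here the direction of the inequality actually makes life easier than in the equality case: since we only claim an upper bound, it suffices that \emph{some} pure-state decomposition (for instance the Schatten one, which is orthogonal and hence by Lemma \ref{lem1} attains equality) lies in $D_{\varphi}(\mathfrak{S})$, and that every competing decomposition obeys the one-sided bound of the first step. I would also take care to record that the limiting/convergence argument for the infinite sums $\sum_{k=1}^\infty p_k^{\alpha}$ and $\sum_{k=1}^\infty \lambda_k^{\alpha}$ carries over verbatim from Lemma \ref{lem1}, since the partial-sum monotonicity used there is valid for all $\alpha \in [0,+\infty)\backslash\{1\}$; the only adjustment throughout is the flip in the direction of the final inequality induced by the sign change of $(1-\alpha)^{-1}$.
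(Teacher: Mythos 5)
Your proposal follows the paper's own proof essentially line for line---invoke the power-sum inequality asserted in Lemma \ref{lem1} for all $\alpha$, exploit the positive sign of $(1-\alpha)^{-1}$, and take the infimum---but the first step is a genuine gap, in your write-up and in the paper alike. The relation $\sum_k p_k^{\alpha} \ge \sum_k \lambda_k^{\alpha}$ does \emph{not} hold for $0 \le \alpha < 1$; it reverses. Passing from the majorization relation $\sum_{k=1}^n p_k \ge \sum_{k=1}^n \lambda_k$ (with both sequences summing to $1$) to power sums uses Schur convexity, and $t \mapsto t^{\alpha}$ is convex only for $\alpha > 1$; on $0 < \alpha < 1$ it is concave, whence $\sum_k p_k^{\alpha} \le \sum_k \lambda_k^{\alpha}$. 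Concretely, let $\rho = \frac{1}{2}I$ on $\mathbb{C}^2$, with Schatten weights $(\frac{1}{2},\frac{1}{2})$, and decompose $\rho$ instead as $\frac{1}{4}(F_1+F_2+F_3+F_4)$, where the $F_i$ are the rank-one projections onto the vectors of two different orthonormal bases. For $\alpha = \frac{1}{2}$ one gets $\sum_k p_k^{1/2} = \sqrt{2}$ while $\sum_k \lambda_k^{1/2} = 2$, so your claimed per-decomposition bound $S_{1/2}(\rho) \ge (1-\alpha)^{-1}\log \sum_k \lambda_k^{\alpha}$ would read $1 \ge 2$. Thus ``every competing decomposition obeys the one-sided bound of the first step'' is precisely what fails: in this regime every competing decomposition lies \emph{above} $S_{\alpha}(\rho)$, not below it.

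The good news is that the caveat in your final paragraph already contains a complete and correct proof, and it should be the whole proof. The Schatten decomposition $\rho = \sum_k p_k E_k$ induces an extremal decomposition $\varphi = \sum_k p_k \varphi_k$ with $\varphi_k(A) = \mathrm{Tr}\, E_k A$, so $\{p_k\} \in D_{\varphi}(\mathfrak{S})$; since an infimum never exceeds its value at one admissible point,
\begin{equation*}
S_{\alpha}^{\mathfrak{S}}(\varphi) \;\le\; (1-\alpha)^{-1}\log \sum_k p_k^{\alpha} \;=\; S_{\alpha}(\rho),
\end{equation*}
with no majorization argument needed (and for every $\alpha$, not only $0 \le \alpha < 1$). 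Moreover, once the Schur-concavity direction is corrected, every $\{\lambda_k\} \in D_{\varphi}(\mathfrak{S})$ satisfies $(1-\alpha)^{-1}\log\sum_k \lambda_k^{\alpha} \ge S_{\alpha}(\rho)$ for $0 \le \alpha < 1$, so the infimum is attained at the Schatten decomposition and one in fact has the equality $S_{\alpha}^{\mathfrak{S}}(\varphi) = S_{\alpha}(\rho)$---a conclusion stronger than the stated inequality (\ref{th01}) and than what the paper's own, similarly flawed, argument delivers.
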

%Pf
\begin{proof}
If $0\le \alpha < 1$, there holds
$$
(1 - \alpha)^{-1} \log \sum_{n} \lambda_n^{\alpha} \le (1 - \alpha)^{-1} \log \sum_{n} p_n^{\alpha}.
$$
This result induces the inequality (\ref{th01}).
\end{proof}

%Density Case----------

\subsection{Density Case}
Since $\mathcal{S}$-mixing R\'enyi entropy depends on $\mathcal{S}$, we can consider the complexity of the state measured from the reference system $\mathcal{S}$. In this chapter, we study the complexities of density operators by taking different reference systems.\\

\noindent Let $\mathbf{C}(\mathcal{H})$ be the set of all compact operators on $\mathcal{H}$. Then $\mathcal{A} := \mathbf{C} (\mathcal{H}) + \mathbb{C}I$ is a $C^*$-algebra. Now let $\theta (\mathbb{R})$ be the set of all 1-parameter strongly continuous automorphisms on $\mathcal{A}$ and let
$$
\theta_t (\cdot) := U_t \cdot U_{-t} \quad, \quad \theta_t \in \theta (\mathbb{R})
$$
where $U_t$ is a unitary operator on $\mathcal{A}$.\\
Furthermore, when $\mathcal{S} = \mathfrak{S}$, we simply denote  $S_{\alpha}^{\mathfrak{S}} (\varphi)$ by $S_{\alpha} (\varphi)$. 

%Th of Invariant
\begin{theorem}
If $\varphi$ is faithful and $\theta$-invariant, and if eigenvalues of $\rho$ are non-degenerate,
\begin{equation}
S_{\alpha}^{I (\theta)} (\varphi) = S_{\alpha} (\varphi)
\end{equation}
holds.
\end{theorem}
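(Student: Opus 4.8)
The plan is to prove the identity by sandwiching $S_{\alpha}^{I(\theta)}(\varphi)$ between two bounds obtained from comparing the admissible decompositions over the reference systems $I(\theta)$ and $\mathfrak{S}$. The non-degeneracy hypothesis will be used to show that the Schatten decomposition of $\rho$ is simultaneously an admissible decomposition over $I(\theta)$, and the spatial form $\theta_t(\cdot)=U_t\cdot U_{-t}$ will be used to show that the extreme points of $I(\theta)$ are pure. First I would record the standing consequences of the hypotheses: since $\varphi$ is faithful it is a normal state given by a full-rank trace-class $\rho$, and since $\varphi$ is $\theta$-invariant we have $\mathrm{Tr}(\rho\,U_tAU_{-t})=\mathrm{Tr}(\rho A)$ for all $A$ and $t$, i.e. $U_{-t}\rho\,U_t=\rho$, so $\rho$ commutes with every $U_t$.

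Writing the Schatten decomposition $\rho=\sum_k p_k E_k$ with $E_k=|e_k\rangle\langle e_k|$ and all $p_k>0$, the non-degeneracy of the eigenvalues forces each one-dimensional eigenspace $\mathbb{C}e_k$ to be invariant under every $U_t$, so $U_t e_k=e^{i\gamma_k t}e_k$ for some $\gamma_k\in\mathbb{R}$. I would then check that the pure states $\varphi_k(A)=\langle e_k,A e_k\rangle$ are $\theta$-invariant, since $\varphi_k(\theta_t A)=\langle U_{-t}e_k,A U_{-t}e_k\rangle=\varphi_k(A)$. A pure state that lies in the smaller convex set $I(\theta)$ is automatically extreme there, so $\{\varphi_k\}\subset\mathrm{ex}\,I(\theta)$ and hence $\varphi=\sum_k p_k\varphi_k\in D_{\varphi}(I(\theta))$. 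This already gives the first bound $S_{\alpha}^{I(\theta)}(\varphi)\le(1-\alpha)^{-1}\log\sum_k p_k^{\alpha}$.

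Next I would establish the reverse inequality by showing $D_{\varphi}(I(\theta))\subseteq D_{\varphi}(\mathfrak{S})$. A normal $\theta$-invariant state corresponds to a density operator lying in the commutant $\{U_t\}'$, which is a type I von Neumann algebra; the extreme points of its set of density operators are its minimal projections, which are one-dimensional and therefore pure states of $\mathcal{A}$. Consequently $\mathrm{ex}\,I(\theta)$ consists of $\theta$-invariant pure states, so $\mathrm{ex}\,I(\theta)\subseteq\mathrm{ex}\,\mathfrak{S}$ and every $\mu\in D_{\varphi}(I(\theta))$ is a decomposition into pure states, i.e. $\mu\in D_{\varphi}(\mathfrak{S})$. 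Taking the infimum in \eqref{renyi} over the smaller index set can only raise the value, so $S_{\alpha}^{I(\theta)}(\varphi)\ge S_{\alpha}^{\mathfrak{S}}(\varphi)=S_{\alpha}(\varphi)$. To close the sandwich I would invoke the preceding theorem, by which the orthogonal (Schatten) decomposition is optimal for $\mathfrak{S}$, giving $S_{\alpha}(\varphi)=S_{\alpha}^{\mathfrak{S}}(\varphi)=(1-\alpha)^{-1}\log\sum_k p_k^{\alpha}$; combined with the two displayed inequalities this forces $S_{\alpha}^{I(\theta)}(\varphi)=S_{\alpha}(\varphi)$.

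The main obstacle is the structural claim of the second step, that the extreme $\theta$-invariant states are pure. This is precisely where the spatial form $\theta_t=U_t\cdot U_{-t}$ and the non-degeneracy of $\rho$ are essential: for a general non-inner action the ergodic states can be genuinely mixed and the inclusion $\mathrm{ex}\,I(\theta)\subseteq\mathrm{ex}\,\mathfrak{S}$ would fail. The delicate point is to argue this at the level of $\mathbf{C}(\mathcal{H})+\mathbb{C}I$ rather than merely in finite dimensions, i.e. to verify that the barycentric decomposition of the faithful normal state $\varphi$ stays within normal states carried by the eigenbasis of $\rho$, so that no non-normal or non-pure extremal component can enter. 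I would also flag that the identification $S_{\alpha}^{\mathfrak{S}}(\varphi)=S_{\alpha}(\rho)$ used in the last step is the one proved for $\alpha>1$; for $0\le\alpha<1$ the same majorization of the eigenvalue vector over any pure-state ensemble yields the matching bound, so that the conclusion persists over the full parameter range.
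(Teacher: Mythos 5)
Your proposal follows essentially the same route as the paper's own proof: one inequality comes from showing that the Schatten decomposition of $\rho$ consists of $\theta$-invariant pure states (via $[U_t,\rho]=0$ and non-degeneracy), hence lies in $D_{\varphi}(I(\theta))$, and the reverse inequality comes from showing that the extreme points of $I(\theta)$ occurring in any decomposition of $\varphi$ are pure, so that $D_{\varphi}(I(\theta))\subseteq D_{\varphi}(\mathfrak{S})$. If anything, your write-up is more careful than the paper, which merely asserts that the ergodic components have pure densities and silently uses the identification $S_{\alpha}^{\mathfrak{S}}(\varphi)=S_{\alpha}(\rho)$ proved only for $\alpha>1$; the two delicate points you flag (normality of the decomposition components on $\mathbf{C}(\mathcal{H})+\mathbb{C}I$, and the extension to $0\le\alpha<1$ by majorization) are exactly the gaps in the paper's argument, and both can be closed as you indicate.
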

%proof
\begin{proof}
Since $\varphi \in I (\theta)$, for any $t \in \mathbb{R}$ and unitaries $U_t$, $[U_t, \rho ] = 0$ holds. Moreover, if $\varphi$ is faithful, $\rho > 0$ is satisfied. Furthermore, since the eigenvalues of $\rho$ are non-degenerate, we can put $\rho = | x_k \> \< x_k |$ where $x_k$ are any eigenvectors of $\rho$. \\
\noindent Therefore, for any $t \in \mathbb{R}$ and any $k$, 
$$
[U_t , \rho_k ] = 0
$$
holds. Hence $\rho_k \in I (\theta)$. Thus we obtain the following inequality:
$$
S_{\alpha} (\varphi) \geq S_{\alpha}^{I (\theta)} (\varphi).
$$
Next, we show the opposite inequality. Let $\varphi = \sum \lambda_k \varphi_k$ be the ergodic decomposition (i.e. $\varphi_k \in {\rm ex} I (\theta)$), and $\rho_k$ be a density adjusted $\varphi_k$. Then $\rho_k$ is a pure state. Therefore $\varphi_k \in {\rm ex} \mathfrak{S}$. Hence
$$
S_{\alpha} (\varphi) \le S_{\alpha}^{I (\theta)} (\varphi).
$$ 
\end{proof}

%Th of KMS
\begin{theorem}
If $\varphi \in K_{\beta} (\theta)$, $S_{\alpha}^{K (\theta)} = 0$.
\end{theorem}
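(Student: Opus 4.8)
The plan is to show that, in the present density setting, the $\beta$-KMS condition pins $\varphi$ down essentially uniquely, so that the reference system $K_\beta(\theta)$ is as small as it can be and the only admissible decomposition of $\varphi$ is the trivial one. First I would record the elementary but decisive observation that the $\mathcal{S}$-mixing R\'enyi entropy vanishes as soon as $\varphi$ admits the one-term decomposition $\varphi = 1\cdot\varphi$ with $\varphi \in {\rm ex}\,\mathcal{S}$: substituting the single weight $\lambda_1 = 1$ into (\ref{renyi}) gives $(1-\alpha)^{-1}\log 1^{\alpha} = (1-\alpha)^{-1}\log 1 = 0$ for every $\alpha \in [0,+\infty)\backslash\{1\}$. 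Hence it suffices to prove that $\varphi$ is an extreme point of the reference system $K_\beta(\theta)$, and the cleanest route in this type I setting is to prove the stronger statement that $K_\beta(\theta)$ consists of $\varphi$ alone.

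Next I would establish this uniqueness. With $\mathcal{A} = \mathbf{C}(\mathcal{H}) + \mathbb{C}I$ and $\theta_t(\cdot) = U_t \cdot U_{-t}$ for a strongly continuous one-parameter group $U_t = e^{itH}$, the $\beta$-KMS boundary condition forces $\varphi$ to be the Gibbs state $\varphi(A) = \mathrm{Tr}(e^{-\beta H}A)/\mathrm{Tr}(e^{-\beta H})$, which, whenever $e^{-\beta H}$ is trace class, is the unique $\beta$-KMS state for this inner dynamics; this is the standard uniqueness of the Gibbs state for type I dynamical systems \cite{br1}. Consequently $K_\beta(\theta) = \{\varphi\}$ is a compact convex singleton, so its extreme boundary is ${\rm ex}\,K_\beta(\theta) = \{\varphi\}$.

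The decomposition step is then immediate. Since ${\rm ex}\,K_\beta(\theta) = \{\varphi\}$, the only probability measure on $K_\beta(\theta)$ with barycenter $\varphi$ and support in the extreme boundary is the Dirac mass $\delta_\varphi$; equivalently, $D_\varphi(K_\beta(\theta))$ contains precisely the trivial weight $\{\lambda_1 = 1\}$. Taking the infimum in the definition of $S_\alpha^{K_\beta(\theta)}(\varphi)$ over this one-element set and using the computation of the previous paragraph yields $S_\alpha^{K_\beta(\theta)}(\varphi) = 0$, which is the assertion.

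The main obstacle is the uniqueness of the KMS state; everything downstream is formal once $K_\beta(\theta)$ is known to be a singleton. The genuine content is the structural fact that for an inner dynamics on (essentially) $B(\mathcal{H})$ the KMS condition admits only the Gibbs solution. I would handle this by a direct argument: writing the boundary condition $\varphi\bigl(A\,\theta_{i\beta}(B)\bigr) = \varphi(BA)$ with the analytic continuation $\theta_{i\beta}(B) = e^{-\beta H}Be^{\beta H}$, testing against rank-one operators, and thereby forcing the density of $\varphi$ to commute with and be a function of $e^{-\beta H}$; normalization then identifies it as the Gibbs density. I would emphasize that the conclusion is specific to this type I density setting, where $K_\beta(\theta)$ degenerates to a point, rather than a feature of general $C^*$-dynamical systems, in which $K_\beta(\theta)$ is merely a Choquet simplex and a non-extremal KMS state would carry strictly positive entropy.
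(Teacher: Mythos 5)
Your proposal follows essentially the same route as the paper's own proof: in this density (type I) setting the $\beta$-KMS state for the inner dynamics $\theta_t = e^{itH}\cdot e^{-itH}$ is the unique Gibbs state, so $K_\beta(\theta)$ degenerates to a singleton, the only admissible decomposition is the trivial one $\lambda_1 = 1$, and the R\'enyi entropy vanishes. In fact you supply more detail than the paper, which simply asserts the uniqueness of the Gibbs state and concludes; your sketch of the uniqueness argument via analytic continuation and rank-one tests fills in precisely the step the paper leaves to the reference \cite{br1}.
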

%proof
\begin{proof}
Let $H$ be a Hamiltonian of a physical system, and $\beta := 1/kT$ ($k$ ; the Boltzmann constant, $T$ ; the temperature). Denote
$$
\rho = \frac{e^{-\beta H}}{{\rm Tr} e^{-\beta H}} \quad ,\quad e^{-\beta H} \in \mathbf{T} (\mathcal{H})
$$
and
$$
\varphi (A) := \mathrm{Tr}\rho A \quad, \quad A \in \mathcal{A}.
$$
Then $\varphi$ is a unique KMS state for $\beta$ and $\theta_t (A) := u_t A u_{-t}$ ($u_t := {\rm exp} (itH)$). Therefore, if $\varphi \in K_{\beta} (\theta)$, from uniqueness of a state,
$$
S_{\alpha}^{K (\theta)} (\varphi) = 0.
$$
\end{proof}

%General Case----------
\subsection{General Case}

In this section, we study the complexities of general states by taking different $\mathcal{S}$. 

%Th of inequalities for K, I, S
\begin{theorem}\label{KMS}
For any KMS states $\varphi \in K_{\beta} (\theta)$, the following inequalities hold:
\begin{enumerate}
\item $S_{\alpha}^{I (\theta)} (\varphi) \geq S_{\alpha}^{K (\theta)} (\varphi)$.
\item $S_{\alpha} (\varphi) \geq S_{\alpha}^{K (\theta)} (\varphi)$.
\end{enumerate}
\end{theorem}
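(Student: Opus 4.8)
The plan is to exploit the chain of inclusions $K_{\beta}(\theta)\subseteq I(\theta)\subseteq\mathfrak{S}$ together with the structural fact that every extremal $\beta$-KMS state is a factor state, hence an extremal $\theta$-invariant (ergodic) state. This gives $\mathrm{ex}\,K_{\beta}(\theta)\subseteq\mathrm{ex}\,I(\theta)$. I would work under the standing hypothesis, already implicit in the paper's restriction to spaces satisfying the Krein--Mil'man and Choquet conditions, that $K_{\beta}(\theta)$ and $I(\theta)$ are Choquet simplices, so that Theorem \ref{ifS} supplies a \emph{unique} maximal measure representing $\varphi$ in each of them. It is also worth recalling up front that the situation here is genuinely the general one: the vanishing $S_{\alpha}^{K(\theta)}(\varphi)=0$ obtained earlier rested on \emph{uniqueness} of the KMS state, which we no longer assume, so both inequalities must be extracted from a comparison of decompositions rather than from non-negativity alone.

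For part (1) I would show that the inequality is in fact an equality. Let $\mu_{K}$ be the unique maximal measure of $\varphi$ in $K_{\beta}(\theta)$; it is pseudo-supported on $\mathrm{ex}\,K_{\beta}(\theta)$, and since $\mathrm{ex}\,K_{\beta}(\theta)\subseteq\mathrm{ex}\,I(\theta)$ it is equally a probability measure pseudo-supported on $\mathrm{ex}\,I(\theta)$ with barycenter $\varphi$, hence maximal in $I(\theta)$. By the uniqueness clause of Theorem \ref{ifS} it must coincide with the maximal measure $\mu_{I}$ of $\varphi$ in $I(\theta)$. Thus the extremal KMS decomposition and the ergodic decomposition of $\varphi$ carry the very same weight sequence $\{\lambda_{k}\}$, whence $S_{\alpha}^{I(\theta)}(\varphi)=S_{\alpha}^{K(\theta)}(\varphi)$, which in particular yields the asserted $\geq$.

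For part (2) the reference system $\mathfrak{S}$ is no longer a simplex and its extreme points are the pure states, so the extremal KMS decomposition is not itself admissible and a real comparison is unavoidable. My plan is a majorization argument in the spirit of Lemma \ref{lem1}: the extremal KMS/ergodic decomposition $\varphi=\sum_{k}\lambda_{k}\varphi_{k}$ is an \emph{orthogonal} decomposition into mutually disjoint factor states, and its weight sequence $\{\lambda_{k}\}$ majorizes the weight sequence $\{\mu_{i}\}$ of every pure-state decomposition of $\varphi$. Since $x\mapsto x^{\alpha}$ makes $\sum_{k}\lambda_{k}^{\alpha}$ Schur-convex for $\alpha>1$ and Schur-concave for $0\le\alpha<1$, and multiplication by $(1-\alpha)^{-1}$ flips the sign in each case, the functional $(1-\alpha)^{-1}\log\sum\lambda_{k}^{\alpha}$ is in both regimes non-increasing under this majorization. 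Hence every pure-state decomposition has Rényi value at least $(1-\alpha)^{-1}\log\sum_{k}\lambda_{k}^{\alpha}=S_{\alpha}^{K(\theta)}(\varphi)$, and passing to the infimum over $D_{\varphi}(\mathfrak{S})$ gives $S_{\alpha}(\varphi)\ge S_{\alpha}^{K(\theta)}(\varphi)$.

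The hard part will be the majorization step in (2): in the general $C^{*}$-setting one must justify that the central (extremal-KMS) measure lies below every maximal measure in the Choquet ordering $\prec$, equivalently that its weights dominate those of an arbitrary pure-state decomposition. I would establish this through the correspondence between orthogonal measures on $\mathfrak{S}$ and abelian von Neumann subalgebras of $\pi_{\varphi}(\mathcal{A})'$: the central measure corresponds to the center $\pi_{\varphi}(\mathcal{A})''\cap\pi_{\varphi}(\mathcal{A})'$, which is contained in every maximal abelian subalgebra, so the central measure is $\prec$ every orthogonal maximal (pure) measure, and one then converts this containment into majorization of the associated weight sequences. The remaining non-orthogonal pure decompositions are handled by the additional observation, exactly the content of Lemma \ref{lem1}, that the orthogonal maximal-abelian pure decomposition already majorizes all others; chaining the two comparisons gives the required domination of $\{\lambda_{k}\}$ over every $\{\mu_{i}\}$.
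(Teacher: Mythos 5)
Your proposal breaks down at the very first structural claim of part (1): that every extremal $\beta$-KMS state is a factor state \emph{and hence} an extremal $\theta$-invariant state, so that $\mathrm{ex}\,K_{\beta}(\theta)\subseteq \mathrm{ex}\,I(\theta)$. The first half is true, but the ``hence'' is false. Extremality in $K_{\beta}(\theta)$ is equivalent to triviality of the center $\pi_{\varphi}(\mathcal{A})''\cap\pi_{\varphi}(\mathcal{A})'$, whereas extremality (ergodicity) in $I(\theta)$ is equivalent to triviality of $\pi_{\varphi}(\mathcal{A})'\cap\{u_{g}^{\varphi}\}'$; for a KMS state the implementing unitaries are the modular group, so ergodicity amounts to triviality of the centralizer of $\varphi$, which fails for many factor KMS states. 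Concretely, the unique trace on a UHF algebra (a $\beta=0$ KMS state, a factor state of type $\mathrm{II}_{1}$) or a Powers state $\omega_{\lambda}$ with respect to its own modular dynamics is extremal KMS but not ergodic. This is why the paper is careful to use only $\mathrm{ex}\,K_{\beta}(\theta)\subset I(\theta)$ and then decomposes each extremal KMS component \emph{further} into ergodic states. Your stronger conclusion in (1) --- equality $S_{\alpha}^{I(\theta)}(\varphi)=S_{\alpha}^{K(\theta)}(\varphi)$ --- is in fact false in general: for a unique but non-ergodic KMS state one has $S_{\alpha}^{K(\theta)}(\varphi)=0$ while $S_{\alpha}^{I(\theta)}(\varphi)>0$ (any decomposition in $D_{\varphi}(I(\theta))$ has at least two nonzero weights, or $D_{\varphi}(I(\theta))$ is empty). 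The inequality in the theorem is genuinely one-sided.

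The paper's actual argument is more elementary: write the unique extremal KMS decomposition $\varphi=\sum_{n}\lambda_{n}\varphi_{n}$, refine each $\varphi_{n}$ into ergodic states (resp.\ pure states for part (2)), and observe that $\sum_{n,k}(\lambda_{n}\mu_{k}^{(n)})^{\alpha}=\sum_{n}\lambda_{n}^{\alpha}\bigl(\sum_{k}(\mu_{k}^{(n)})^{\alpha}\bigr)$, where the inner sums are $\le 1$ for $\alpha>1$ and $\ge 1$ for $0\le\alpha<1$, so that in both regimes the Rényi functional of the refinement dominates $(1-\alpha)^{-1}\log\sum_{n}\lambda_{n}^{\alpha}=S_{\alpha}^{K(\theta)}(\varphi)$; taking the infimum finishes the proof. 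Your part (2) is close to this in spirit --- the Schur-concavity computation is the same inequality --- and your idea of using the subcentral measure and the order $\prec$ to compare against \emph{arbitrary} pure decompositions (not only those refining the KMS one) is actually the right tool to close a gap the paper itself glosses over; note that only orthogonality of the KMS components is needed there, not the (false) ergodicity. But as written this step is deferred to an unproven ``general $C^{*}$-analogue'' of Lemma \ref{lem1}, which the paper states only for density operators, and your standing hypothesis that $I(\theta)$ is a Choquet simplex imports $G$-commutativity, an assumption the paper deliberately reserves for its final theorem and which this theorem does not make. So: part (1) needs to be redone along the paper's refinement lines, and part (2) needs the $\prec$-to-majorization conversion actually proved.
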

%proof
\begin{proof}
1.\ The decomposition from $\varphi \in K_{\beta} (\theta)$ into ${\rm ex} K_{\beta}(\theta)$ is unique \cite{br1}. We put the decomposition $\varphi = \sum \lambda_n \varphi_n$. Then $\varphi_n \perp \varphi_m \ (n \neq m)$ holds.\
On the other hand, since ${\rm ex} K_{\beta} (\theta) \subset I (\theta)$, $\varphi_n$ can be decomposed into the elements of ${\rm ex}I (\theta)$, that is, ergodic states. Let $\varphi_n = \sum \mu_k^{(n)} \psi_k$ ($\psi_k \in {\rm ex}I (\theta)$) be the ergodic decomposition. Because of the uniqueness of the decomposition into $\varphi_n$,  we can regard $ (1-\alpha)^{-1} \log \sum_{n} (\lambda_n)^{\alpha}$ as the constant. Furthermore, $0 \le \sum_k (\mu_k^{(n)})^{\alpha} < 1$ holds. Therefore we have
\begin{eqnarray*}
(1-\alpha)^{-1} \log \sum_{k, n} (\lambda_n \mu_k^{(n)})^{\alpha} &=& (1-\alpha)^{-1} \log \sum_{n} (\lambda_n)^{\alpha} \sum_k (\mu_k^{(n)})^{\alpha} \\
&=& \frac{1}{\alpha - 1} \left\{ - \log \sum_n \lambda_n^{\alpha} + \left( - \log \sum_{n, k} (\mu_k^{(n)})^{\alpha} \right) \right\} \\
&\geq& (1 - \alpha)^{-1} \log \sum_n \lambda_n^{\alpha} =  S_{\alpha}^{K(\theta)}(\varphi).
\end{eqnarray*}
By taking the infimum over all $\{ \mu_k^{(n)}\}$, we obtain $S_{\alpha}^{I (\theta)} (\varphi) \geq S_{\alpha}^{K (\theta)} (\varphi)$.\\

\noindent 2. Since ${\rm ex} K_{\beta} (\theta) \subset \mathfrak{S}$, we obtain the inequality in the same way as 1.
\end{proof}

%G-commutative
\noindent Moreover, in order to investigate the inequality between $S_{\alpha}^{I (\theta)} (\varphi)$ and $S_{\alpha} (\varphi)$, we need $G$-commutativity of $(\mathcal{A}, \theta (G))$. Thus, we recall the definition.\\
\noindent Let $(\mathcal{H}_{\varphi}, \pi_{\varphi}, x_{\varphi})$ be the $GNS$-representation defined by $\varphi$ and $\{ u_g^{\varphi}\ ;\ g \in G \}$ be the strongly continuous unitary group on $\mathcal{H}_{\varphi}$.

\begin{definition}
Let $E_{\varphi}$ be a projection from $\mathcal{H}_{\varphi}$ to the set of $u_g^{\varphi}$-invariant vectors. If $E_{\varphi} \pi_{\varphi}(\mathcal{A})'' E_{\varphi}$ is a commutative von Neumann algebra, $(\mathcal{A}, \theta (G))$ is called {\it G-commutative} for $\varphi$.
\end{definition}

%Th of I(\theta)
\noindent Furthermore, we mention the following theorem.
\begin{theorem}\label{erg}
For $\varphi \in I (\theta)$, the following are satisfied:
\begin{enumerate}
\item There exists $\mu \in \mathcal{O}_{\varphi} (I(\theta))$ whose pseudo-support is ${\rm ex} I (\theta)$.
\item If $(\mathcal{A}, \theta (G))$ is $G$-commutative, $I (\theta)$ is a Choquet simplex. Therefore, then the above $\mu$ is a unique maximal measure. 
\end{enumerate}
\end{theorem}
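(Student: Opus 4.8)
The plan is to route both parts through the covariant GNS data of $\varphi$ together with the standard dictionary between orthogonal measures and abelian subalgebras of the commutant \cite{br1}. Since $\varphi \in I(\theta)$, the GNS triple $(\mathcal{H}_{\varphi}, \pi_{\varphi}, x_{\varphi})$ carries a strongly continuous unitary group $\{ u_g^{\varphi} \}$ with $u_g^{\varphi}\pi_{\varphi}(A)(u_g^{\varphi})^* = \pi_{\varphi}(\theta_g(A))$ and $u_g^{\varphi} x_{\varphi} = x_{\varphi}$; let $E_{\varphi}$ be the projection onto the $u_g^{\varphi}$-invariant vectors, as in the definition of $G$-commutativity. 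Recall that $\mathcal{O}_{\varphi}(\mathfrak{S})$ is in order-preserving bijection with the abelian von Neumann subalgebras $\mathcal{B} \subseteq \pi_{\varphi}(\mathcal{A})'$, the components of the decomposition being read off from the spectral theory of $\mathcal{B}$, with $\mu_{\mathcal{B}} \prec \mu_{\mathcal{B}'}$ exactly when $\mathcal{B} \subseteq \mathcal{B}'$.

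For part (1) I would first identify the abelian algebras whose measure is carried by $I(\theta)$: a component of the decomposition is $\theta$-invariant precisely when the corresponding spectral projections commute with every $u_g^{\varphi}$, so these are the abelian subalgebras of the invariant commutant $\mathcal{C} := \pi_{\varphi}(\mathcal{A})' \cap \{ u_g^{\varphi} : g \in G \}'$. I would then take $\mathcal{B}$ to be a maximal abelian subalgebra of $\mathcal{C}$; by the dictionary this produces an orthogonal measure $\mu \in \mathcal{O}_{\varphi}(I(\theta))$. Maximality of $\mathcal{B}$ means that almost every component $\psi$ of $\mu$ has trivial invariant commutant $\pi_{\psi}(\mathcal{A})' \cap \{ u_g^{\psi} \}' = \mathbb{C}I$, which is exactly the characterization of an ergodic, i.e. extremal invariant, state. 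Hence $\mu$ is pseudo-supported on ${\rm ex}\, I(\theta)$; equivalently $\mu$ is a maximal measure on the compact convex set $\mathcal{S} = I(\theta)$, and the stated theorem on maximal measures confirms the pseudo-support. Note that this part uses no commutativity hypothesis.

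For part (2) I would use the $G$-commutativity assumption, namely that $E_{\varphi}\pi_{\varphi}(\mathcal{A})'' E_{\varphi}$ is abelian. The crux is the order isomorphism between the invariant states dominated by a multiple of $\varphi$ and a positive cone in the compressed algebra: if $\psi \in I(\theta)$ and $\psi \le \lambda \varphi$, the GNS Radon--Nikodym theorem supplies $0 \le T \le \lambda$ in $\pi_{\varphi}(\mathcal{A})'$ with $\psi(\cdot) = \< x_{\varphi}, T\pi_{\varphi}(\cdot) x_{\varphi}\>$, and averaging $T$ over the action lets us take $T \in \mathcal{C}$; compression by $E_{\varphi}$ then carries this cone $*$-isomorphically onto the positive part of $E_{\varphi}\pi_{\varphi}(\mathcal{A})'' E_{\varphi}$. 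Since that algebra is abelian it is a lattice in its natural ordering, and transporting this back shows that the cone $\mathcal{K}$ with base $I(\theta)$ is a lattice; that is, $I(\theta)$ is a Choquet simplex. By Theorem \ref{ifS} the maximal probability measure with barycenter $\varphi$ is then unique, and since the $\mu$ constructed in part (1) is such a maximal measure, it is the unique one.

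I expect the main obstacle to be precisely this order-isomorphism step in part (2): checking that the $G$-averaged Radon--Nikodym operators genuinely land in $\mathcal{C}$, that compression by $E_{\varphi}$ is a bijective $*$-isomorphism onto $E_{\varphi}\pi_{\varphi}(\mathcal{A})'' E_{\varphi}$ rather than into a proper subalgebra, and that the lattice structure of this abelian algebra transfers faithfully to $\mathcal{K}$. Once this correspondence is established, both the simplex property and the uniqueness of $\mu$ follow formally from Theorem \ref{ifS}.
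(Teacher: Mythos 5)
First, a point of reference: the paper never proves Theorem \ref{erg} at all --- it is ``mentioned'' as a known result, quoted from the decomposition theory of invariant states in \cite{br1} --- so your proposal must stand on its own, and its architecture (orthogonal measures corresponding to abelian subalgebras of the commutant; a maximal abelian subalgebra of the invariant commutant $\mathcal{C}=\pi_{\varphi}(\mathcal{A})'\cap\{u_g^{\varphi}\}'$ for part (1); $G$-commutativity for part (2)) is indeed the standard one. The genuine gap in part (1) is the sentence ``maximality of $\mathcal{B}$ means that almost every component $\psi$ of $\mu$ has trivial invariant commutant'': that is not an argument, it is the theorem. For a general orthogonal measure there is no pointwise notion of ``component''; one would need a direct-integral disintegration, which requires separability/metrizability hypotheses not assumed here, and this is exactly why the conclusion is phrased in terms of pseudo-supports (every Baire set containing ${\rm ex}\,I(\theta)$ has full measure) rather than supports. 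Your follow-up ``equivalently $\mu$ is a maximal measure'' compounds this: the quoted theorem on compact convex sets gives maximal $\Rightarrow$ pseudo-supported by extreme points, but the converse is not available in this (non-metrizable) generality, and in any case the chain is circular, since pseudo-support is what you set out to prove. The implication ``$\mathcal{B}$ maximal abelian in $\mathcal{C}$ $\Rightarrow$ $\mu_{\mathcal{B}}$ pseudo-supported by ${\rm ex}\,I(\theta)$'' is the entire content of the corresponding theorem in \cite{br1} and needs a real argument at the level of Baire sets and the ordering $\prec$.

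In part (2) there are two problems. Minor one: ``averaging $T$ over the action'' is both unavailable (nothing makes $G$ amenable) and unnecessary --- if $\psi\le\lambda\varphi$ is invariant, then invariance of $\psi$ together with the fact that $x_{\varphi}$ is cyclic for $\pi_{\varphi}(\mathcal{A})$, hence separating for $\pi_{\varphi}(\mathcal{A})'$, forces $(u_g^{\varphi})^*Tu_g^{\varphi}=T$ automatically. Major one: the step you yourself flag as ``the main obstacle'' is precisely where the proof lives, and the route you indicate does not close it. The compression $T\mapsto TE_{\varphi}$ is an injective $*$-homomorphism whose image \emph{commutes} with $E_{\varphi}\pi_{\varphi}(\mathcal{A})''E_{\varphi}$; it does not land in that algebra for free, and the commutant of an abelian algebra is in general neither abelian nor a lattice, so $G$-commutativity used this way transfers nothing. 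The missing idea is cyclicity: $x_{\varphi}=E_{\varphi}x_{\varphi}$ is cyclic for $E_{\varphi}\pi_{\varphi}(\mathcal{A})E_{\varphi}$ acting on $E_{\varphi}\mathcal{H}_{\varphi}$ (because $\pi_{\varphi}(\mathcal{A})x_{\varphi}$ is dense in $\mathcal{H}_{\varphi}$), and an abelian von Neumann algebra with a cyclic vector is maximal abelian; hence under $G$-commutativity the von Neumann algebra generated by $E_{\varphi}\pi_{\varphi}(\mathcal{A})''E_{\varphi}$ equals its own commutant, the compressed Radon--Nikodym operators do land inside this abelian algebra, and consequently $\mathcal{C}$ itself is abelian, so $\mathcal{C}_+$ is a lattice. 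Finally, mind the scope: this yields the lattice property only for the face of positive invariant functionals dominated by multiples of $\varphi$, hence uniqueness of the maximal measure \emph{at} $\varphi$; to conclude that all of $I(\theta)$ is a Choquet simplex and then invoke Theorem \ref{ifS}, you need $G$-commutativity at every invariant state (the paper's own wording, which defines $G$-commutativity relative to a single $\varphi$, is loose on the same point).
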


%Th of inequalities
\noindent Now we prove the following inequalities.
\begin{theorem}
If $(\mathcal{A}, \theta (\mathbb{R}))$ is $G$-commutative for $\varphi$, 
\begin{equation}
S_{\alpha} (\varphi) \geq S_{\alpha}^{I(\theta)} (\varphi) \geq S_{\alpha}^{K(\theta)} (\varphi).
\end{equation}
\end{theorem}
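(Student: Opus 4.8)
The statement is a chain of two inequalities, and I would handle them separately. The right-hand inequality $S_{\alpha}^{I(\theta)}(\varphi)\ge S_{\alpha}^{K(\theta)}(\varphi)$ is nothing but part (1) of Theorem~\ref{KMS} (recall that a KMS state is automatically $\theta$-invariant, so every quantity in the chain is defined), so I would simply invoke it. All the genuine content lies in the left-hand inequality $S_{\alpha}(\varphi)\ge S_{\alpha}^{I(\theta)}(\varphi)$, and this is precisely where the $G$-commutativity hypothesis must enter.

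My plan for the left inequality is the following. First, by $G$-commutativity together with Theorem~\ref{erg}(2), the set $I(\theta)$ is a Choquet simplex, so the ergodic decomposition $\varphi=\sum_k\lambda_k\psi_k$ with $\psi_k\in{\rm ex}\,I(\theta)$ is the \emph{unique} maximal (orthogonal) decomposition of $\varphi$ inside $I(\theta)$. Hence the infimum defining $S_{\alpha}^{I(\theta)}(\varphi)$ is taken over a single decomposition and
\begin{equation}
S_{\alpha}^{I(\theta)}(\varphi)=(1-\alpha)^{-1}\log\sum_k\lambda_k^{\alpha}
\end{equation}
is a fixed number, exactly as $S_{\alpha}^{K(\theta)}(\varphi)$ was a fixed number in the proof of Theorem~\ref{KMS}. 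Next I would take an extremal ($\mathfrak{S}$-)decomposition $\varphi=\sum_j\nu_j\omega_j$ with $\omega_j\in{\rm ex}\,\mathfrak{S}$ that refines the ergodic one, i.e.\ the pure states group so that for each $k$ one has $\psi_k=\sum_{j\in I_k}\mu_{kj}\omega_j$ with $\mu_{kj}>0$, $\sum_{j\in I_k}\mu_{kj}=1$, and $\nu_j=\lambda_k\mu_{kj}$ for $j\in I_k$. Then
\begin{equation}
\sum_j\nu_j^{\alpha}=\sum_k\lambda_k^{\alpha}\sum_{j\in I_k}\mu_{kj}^{\alpha},
\end{equation}
and since $\sum_{j\in I_k}\mu_{kj}^{\alpha}\le 1$ for $\alpha>1$ while $\sum_{j\in I_k}\mu_{kj}^{\alpha}\ge 1$ for $0\le\alpha<1$, multiplying by $(1-\alpha)^{-1}\log(\cdot)$ reverses the sign in exactly the way that makes the estimate point the same direction in both regimes, giving
\begin{equation}
(1-\alpha)^{-1}\log\sum_j\nu_j^{\alpha}\ \ge\ (1-\alpha)^{-1}\log\sum_k\lambda_k^{\alpha}=S_{\alpha}^{I(\theta)}(\varphi)
\end{equation}
for every $\alpha\in[0,+\infty)\backslash\{1\}$. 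This is the same computation as in Theorem~\ref{KMS}(1), now run on the refinement of the ergodic decomposition into pure states.

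To finish, I would argue that the infimum $S_{\alpha}(\varphi)=S_{\alpha}^{\mathfrak{S}}(\varphi)$ over \emph{all} extremal decompositions is bounded below by $S_{\alpha}^{I(\theta)}(\varphi)$, which requires that every extremal decomposition of $\varphi$ be a refinement of the unique ergodic decomposition. Once that is known, the displayed lower bound applies to each such decomposition and passing to the infimum yields $S_{\alpha}(\varphi)\ge S_{\alpha}^{I(\theta)}(\varphi)$; combined with Theorem~\ref{KMS}(1) this gives the full chain.

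The hard part will be exactly this refinement claim, namely that the orthogonal measure $\mu_{I(\theta)}$ of the ergodic decomposition precedes, in the Choquet order $\prec$, every maximal orthogonal measure on $\mathfrak{S}$ with barycenter $\varphi$. This is the step that genuinely consumes the $G$-commutativity hypothesis: it makes $I(\theta)$ a simplex, so that $\mu_{I(\theta)}$ is the unique maximal orthogonal measure inside $I(\theta)$ (Theorem~\ref{erg}), and then, through the correspondence between orthogonal measures in $\mathcal{O}_{\varphi}(\mathfrak{S})$ and abelian von Neumann subalgebras of $\pi_{\varphi}(\mathcal{A})'$ under which $\prec$ translates into inclusion of the associated subalgebras, one compares $\mu_{I(\theta)}$ with the finer pure decompositions. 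I would establish the claim by showing that the abelian algebra attached to the ergodic decomposition sits inside the algebra attached to any maximal orthogonal (pure) decomposition, so that $\mu_{I(\theta)}\prec\mu$ for all such $\mu$; verifying this domination rigorously in the purely $C^*$-algebraic setting, rather than merely in the density-operator model where it reduces to the majorization relation between bunched eigenvalues and the spectral decomposition, is the delicate point that must be checked with care.
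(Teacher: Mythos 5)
Your overall architecture matches the paper's: the right-hand inequality is quoted from Theorem \ref{KMS}, and the left-hand one is reduced, via $G$-commutativity and Theorem \ref{erg}, to the fact that the unique ergodic decomposition makes $S_{\alpha}^{I(\theta)}(\varphi)$ a single fixed number $(1-\alpha)^{-1}\log\sum_k\lambda_k^{\alpha}$; the paper's proof is exactly this, compressed into one sentence. The genuine gap is the step you yourself flag as the hard part, and the problem is worse than delicacy: the claim is false. It is not true that every extremal decomposition of $\varphi$ refines the unique ergodic decomposition, nor that $\mu_{I(\theta)}\prec\mu$ for every maximal orthogonal measure $\mu$ with barycenter $\varphi$, nor that the abelian algebra attached to $\mu_{I(\theta)}$ is contained in the one attached to $\mu$. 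Counterexample: take $\mathcal{A}=\mathbf{C}(\mathcal{H})+\mathbb{C}I$ with $\dim\mathcal{H}=2$, $H=\mathrm{diag}(E_1,E_2)$ with $E_1\neq E_2$, $\theta_t=e^{itH}\cdot e^{-itH}$, and $\varphi=\frac{1}{2}\mathrm{Tr}(\cdot)$. This system is $G$-commutative for $\varphi$ (the compression $E_{\varphi}\pi_{\varphi}(\mathcal{A})''E_{\varphi}$ acts by diagonal multiplication on the two-dimensional space of invariant vectors), and the unique ergodic decomposition is $\varphi=\frac{1}{2}\omega_{e_1}+\frac{1}{2}\omega_{e_2}$ into energy eigenstates. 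Yet $\varphi=\frac{1}{2}\omega_{y_1}+\frac{1}{2}\omega_{y_2}$ with $y_{\pm}=(e_1\pm e_2)/\sqrt{2}$ is an extremal decomposition that is not a refinement of the ergodic one: no sub-mixture of $\{\omega_{y_1},\omega_{y_2}\}$ equals $\frac{1}{2}\omega_{e_1}$. Concretely, $f(\psi)=|\psi(\mathrm{diag}(1,-1))|$ is convex and weak*-continuous, and $\mu_{I(\theta)}(f)=1>0=\mu_y(f)$, so $\mu_{I(\theta)}\not\prec\mu_y$; moreover the two decompositions correspond to two distinct maximal abelian subalgebras of $\pi_{\varphi}(\mathcal{A})'$, neither containing the other. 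So the Cartier--Fell--Meyer/dilation route you sketch cannot be completed.

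What actually makes the left inequality work, where it can be proved, is majorization rather than refinement. In the density-operator model (faithful invariant $\varphi$, nondegenerate eigenvalues) the unique ergodic decomposition is the Schatten decomposition, whose weight vector majorizes the weight vector of every decomposition into pure states; since $x\mapsto\sum_k x_k^{\alpha}$ is Schur-convex for $\alpha>1$ and Schur-concave for $0\le\alpha<1$, one gets $(1-\alpha)^{-1}\log\sum_j\nu_j^{\alpha}\ge(1-\alpha)^{-1}\log\sum_k\lambda_k^{\alpha}$ for every pure decomposition $\{\nu_j\}$ directly, with no grouping of the pure states at all --- this is precisely the mechanism of the paper's Lemma \ref{lem1}, and it is what should replace your refinement lemma (in the counterexample above the two Rényi values simply coincide). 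You were right that something nontrivial is being skipped at this point: the paper's own one-sentence proof never explains how an arbitrary extremal decomposition is compared with the ergodic one, so in the abstract $C^*$-setting, away from the density case, both your argument and the paper's remain incomplete; but any repair has to run through a majorization-type comparison of the weights, not through the (false) claim that the ergodic measure is Choquet-dominated by every maximal measure.
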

%proof
\begin{proof}
According to Theorem \ref{erg},   the ergodic decomposition of $\varphi$ is unique. Hence the first inequality is satisfied. The second one holds from Theorem \ref{KMS}.
\end{proof}

\end{document}